\title{Optimizing Wiggle in Storylines}
\author{Alexander Dobler}{TU Wien, Vienna, Austria}{adobler@ac.tuwien.ac.at}{https://orcid.org/0000-0002-0712-9726}{Vienna Science and Technology Fund (WWTF)  grant [10.47379/ICT19035]}
\author{Tim Hegemann}{Universität Würzburg, Germany}{hegemann@informatik.uni-wuerzburg.de}{https://orcid.org/0009-0008-4770-3391}{Federal Ministry of Research, Technology and Space (BMFTR) grant [01IS22012C]}
\author{Martin Nöllenburg}{TU Wien, Vienna, Austria}{noellenburg@ac.tuwien.ac.at}{https://orcid.org/0000-0003-0454-3937}{Vienna Science and Technology Fund (WWTF)  grant [10.47379/ICT19035]}
\author{Alexander Wolff}{Universit\"at W\"urzburg, Germany \and \url{https://www.informatik.uni-wuerzburg.de/en/algo/team/wolff-alexander}}{}{https://orcid.org/0000-0001-5872-718X}{}
\authorrunning{A. Dobler, T. Hegemann, M. Nöllenburg, and A. Wolff} %
\keywords{Storyline visualization, wiggle minimization, NP-complete,
  linear programming, quadratic programming, experimental analysis}
\pgfplotsset{compat=1.18,small,scale only axis}
\DeclareMathOperator{\tm}{time}
\DeclareMathOperator{\charac}{ch}
\DeclareMathOperator{\Ac}{AC}
\DeclareMathOperator{\LWH}{LWH}
\DeclareMathOperator{\QWH}{QWH}
\DeclareMathOperator{\WC}{WC}
\newcommand{\di}{\ensuremath{\Delta}}
\newcommand{\da}{\ensuremath{\overline{\Delta}}}
\theoremstyle{definition}
\newtheorem{problem}{Problem}
\newcommand{\probname}[1]{{\normalfont\textsc{#1}}\xspace}
\newcommand{\WiggleHProb}{\probname{LWHMin}}
\newcommand{\WiggleCProb}{\probname{WCMin}}
\newcommand{\QdrWiggleHProb}{\probname{QWHMin}}
\newcommand{\Satprob}{\probname{Planar Monotone 3-Sat}}
\newcommand{\Satprobshort}{\probname{PM3-Sat}}
\definecolor{dark blue}{rgb}{0.121,0.47,0.705}
\let\emph\relax\DeclareTextFontCommand{\emph}{\color{dark blue}\em}
\begin{document}
\maketitle

\begin{abstract}
A \emph{storyline visualization} shows interactions between characters over time.  
Each character is represented by an x-monotone curve.  
Time is mapped to the x-axis, and groups of characters that interact at a particular point $t$ in time must be ordered consecutively in the y-dimension at $x=t$. 
The predominant objective in storyline optimization so far has been the minimization of crossings between (blocks of) characters. 
Building on this work, we investigate another important, but less studied quality criterion, namely the minimization of \emph{wiggle}, i.e., the amount of vertical movement of the characters over time. 

Given a storyline instance together with an ordering of the characters at any point in time, we show that \emph{wiggle count minimization} is NP-complete.
In contrast, we provide algorithms based on mathematical programming to solve \emph{linear wiggle height minimization} and \emph{quadratic wiggle height minimization} efficiently.
Finally, we introduce a new method for routing character curves that focuses on keeping distances between neighboring curves constant as long as they run in parallel.

We have implemented our algorithms, and we conduct a case study 
that explores the differences between the three optimization objectives.
We use existing benchmark data, but we also present a new use case for storylines, namely the visualization of rolling stock schedules in railway operation.
\end{abstract}

\section{Introduction}

A \emph{storyline} can be seen as a temporal hypergraph; the vertices
represent \emph{characters} and the hyperedges, which correspond to
given points in time, represent \emph{meetings} (also called
\emph{interactions}) among the characters.  A \emph{storyline
  visualization} draws each character as an x-monotone curve and each
meeting as a vertical line segment at the x-coordinate that
corresponds to the point in time when the meeting happens; see
\cref{fig:teaser,fig:notationwiggle}a.  Storyline visualizations
have been made for books or movies
\cite{xkcd-storylines,gronemann2016,DoblerJJMMN-GD24,
  liu_storyflow_2013,tanahashi_design_2012} (where the meetings are
the scenes of the story), but they have also been used to illustrate
scientific collaboration \cite{hegemann-wolff-GD24} (where the
meetings are joint publications), for genealogical data
\cite{kch-tgdt-AVI10} (where the meetings are marriages and have
size~2), or for tweets following certain
topics~\cite{liu_storyflow_2013}.

\begin{figure}[tb]
\centering
\includegraphics[trim=0.5mm 0mm 0mm 5mm]{./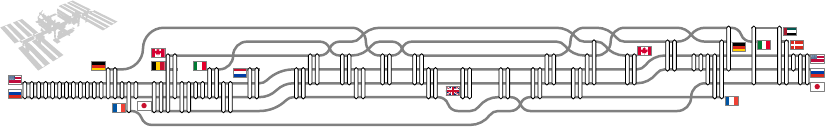}
\caption{A storyline that visualizes the nationalities of the crew members of expeditions to the International Space Station.
    Linear wiggle height is minimized using our LP formulation.}
\label{fig:teaser}
\end{figure}

In order to measure the quality of storyline visualizations, various
metrics have been suggested.  Most works have focused on reducing the
number of crossings of the character curves (simple pairwise
crossings~\cite{knpss-mcsv-GD15,gronemann2016} or so-called block
crossings~\cite{vandijk17, dlmw-csfbc-GD17}).  Others have also tried
to reduce the number of
wiggles~\cite{tanahashi_design_2012,liu_storyflow_2013,fn-mwsv-GD17}
(that is, the number of turns) and/or the amount of vertical
white-space \cite{lwwll-stes-TVCG13,tanahashi_design_2012}.

In this paper, we follow the storyline layout pipeline that has been
suggested by Liu, Wu, Wei, Liu, and
Liu~\cite[Fig.~2]{liu_storyflow_2013}.  It breaks down the overall
problem into four steps; 1.~hierarchy generation, 2.~ordering
(crossing minimization), 3.~alignment (wiggle minimization), and
4.~compaction (white-space reduction).  For step~1, Liu et al.\ assume
that the characters form a given hierarchy that must be respected in
step~2.  We assume that steps~1 and~2 have been settled, optimally or
heuristically.  Note that crossing minimization is
NP-hard~\cite{knpss-mcsv-GD15,vandijk17}.

This paper focuses on wiggle minimization, which appears in step~3
and, in a different form, in step~4.  We differentiate between three
variants; in each of them, we are given, for every point in time, the
vertical ordering of the characters, which for us is fixed.  In
\emph{wiggle count minimization} (\WiggleCProb) the task is to find, for each point
in time, y-coordinates for the character curves such that the total
number of inflection points (points where the curvature of the character curves changes sign) is
minimized. %
In \emph{linear wiggle height minimization} (\WiggleHProb), the total
change in y-coordinate is minimized (see the example in
\cref{fig:teaser}).  Finally, in \emph{quadratic wiggle height
  minimization} (\QdrWiggleHProb), the total sum of the squared wiggle heights is
minimized.

\subparagraph*{Related work.}
In terms of wiggle minimization in storylines, the existing literature mostly proposes heuristic methods, does not formally define a wiggle metric, or considers models that differ from our setting.
Ogawa and Ma~\cite{ogawa_software_2010} propose a greedy algorithm for storyline visualization that attempts to simultaneously minimize crossings and wiggle, however, wiggle is not formally defined. Tanahashi and Ma~\cite{tanahashi_design_2012} present a genetic algorithm for drawing storylines. They make use of \emph{slots}, which are horizontal strips of the visualization. A genome assigns each meeting to a slot. To evaluate a genome a pipeline approach is used, which rearranges lines inside slots, and then computes a fitness function measuring crossings and wiggle. Liu et al.~\cite{liu_storyflow_2013} also present a pipeline approach for drawing storylines, where characters further have geographic locations. Crossings are minimized first by applying a variant of the barycenter heuristic~\cite{DBLP:journals/tsmc/SugiyamaTT81} sequentially, then wiggle count is minimized heuristically by reducing the problem to the weighted longest common subsequence problem of two neighboring time steps. Lastly, quadratic wiggle height and whitespace is minimized using a quadratic program. The authors of \cite{tanahashi_efficient_2015} generalize the earlier approach of Tanahashi and Ma~\cite{tanahashi_design_2012} to streaming data, where time steps appear one by one. 
Arendt and Pirrung\cite{arendt_y_2017} consider a model where characters can split and merge. They first minimize crossings, and then reduce wiggle count minimization to the independent set problem, which they solve heuristically. %
Fröschl and Nöllenburg~\cite{froschl_minimizing_2018,fn-mwsv-18} proposed a model for storyline visualization where characters are assigned to discrete cells in a matrix for each time step. 
They used ILP and SAT solvers to simultaneously minimize a combination of crossings, wiggle count, and linear wiggle height.

Wiggle minimization also plays a role in \emph{streamgraphs} and \emph{stacked area charts} \cite{byronStackedGraphsGeometry2008,dibartolomeoThereMoreStreamgraphs2016,strungemathiesenAestheticsOrderingStacked2021}, which are visualizations of multiple time series vertically stacked on top of each other without gaps. %
In these charts, wiggle is minimized combinatorially by computing the best stacking order of the time series.

\subparagraph*{Our contribution}\hspace*{-2ex} is as follows.
\begin{itemize}
\item We present a linear program (LP) to solve \WiggleHProb efficiently; %
see
  \cref{section:linearprogram}.  
\item We give a quadratic program (QP) to solve \QdrWiggleHProb  efficiently; see \cref{sub:quadratic}.
\item We prove that \WiggleCProb is NP-hard, but can be formulated as
  an integer linear program (ILP) and admits an efficient solution for
  two consecutive time steps; see \cref{sec:wcmin}.
\item We present a new method for routing character curves that
  focuses on keeping distances between neighboring curves constant as
  long as they run in parallel; see \cref{sec:routing}.
\item We report the results of a case study with 17 benchmark instances
  in which we compare optimal solutions for the three objectives qualitatively and quantitatively; see
  \cref{sub:benchmark}.
\item We present the
  visualization of rolling stock schedules as a new use case for storylines; see \cref{sub:rolling}.
  This can help railway experts to improve or compare
  such schedules.
\end{itemize}

\section{Preliminaries}
\label{sec:preliminaries}

We use $[n]$ as shorthand for $\{1,2,\dots,n\}$.
A \emph{storyline instance} is a $4$-tuple
$(\mathcal{C}, T, \mathcal{M}, A)$, where
$\mathcal{C}=\{c_1,\dots, c_n\}$ is a set of \emph{characters},
$T=[\ell]$ %
is a set of totally ordered \emph{time steps}
(or \emph{layers}), and $\mathcal{M}=\{M_1,\dots,M_m\}$ is a set of
\emph{meetings}; see \cref{fig:notationwiggle}a.
\begin{figure}[tb]
  \centering
  \includegraphics{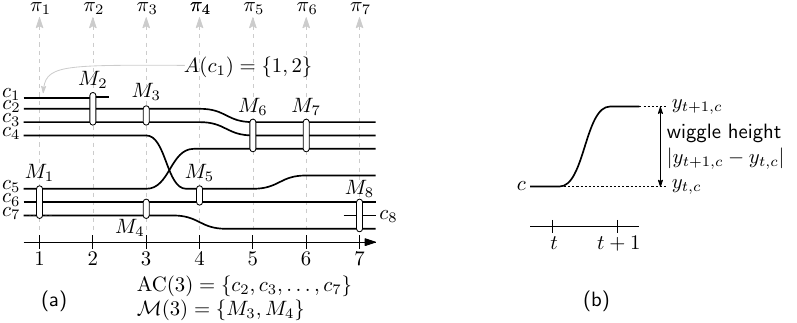}
  \caption{(a) Notation of storylines.
    (b) Wiggle height of character $c$ between two time steps.}
  \label{fig:notationwiggle}
\end{figure}
Each meeting $M\in \mathcal{M}$ has a
corresponding time step $\tm(M) \in [\ell]$ and consists of a set of
characters $\charac(M) \subseteq \mathcal{C}$.  Each character
$c\in \mathcal{C}$ is \emph{active} for a sequence
$A(c)=\{i, i+1, \dots, j\}$ of consecutive time steps.  For each
$t\in [\ell]$, we define the active character set
$\Ac(t)=\{c\in \mathcal{C}\mid t\in A(c)\}$ and the meeting set
 $\mathcal{M}(t)=\{M\in \mathcal{M}\mid \tm(M)=t\}$.

An \emph{ordered storyline instance} is furthermore given, for each $t\in [\ell]$, a permutation $\pi_t$ of the characters $\Ac(t)$. Here, for each meeting $M$ with $\tm(M)=t$, the characters $\charac(M)$ appear consecutively in $\pi_t$. 
We write $c\prec_t c'$ if $c$ comes before $c'$ in $\pi_t$.
A \emph{coordination} of a storyline instance defines for each $t$ and each $c\in \Ac(t)$ a y-coordinate $y_{t,c}\in \mathbb{R}$. The coordination is \emph{valid} w.r.t.\ an ordered storyline instance if for each $t\in [\ell]$ and each pair $c,c'\in \Ac(t)$ with $c\prec_{t}c'$, $y_{t,c}<y_{t,c'}$ holds. Given $\di,\da\in \mathbb{R}^+_0$, a coordination is \emph{$(\di,\da)$-nice} (or \emph{nice} if $\di$ and $\da$ are clear from the context) if it is valid and if, for every $t\in [\ell]$ and for every pair of characters $c,c'$ that are consecutive in $\pi_{t}$, it holds that
\begin{itemize}
    \item $|y_{t,c'}-y_{t,c}|=\di$ if $c$ and $c'$ are in the same meeting at time step $t$, and
    \item $|y_{t,c'}-y_{t,c}|\ge \da$ otherwise.
\end{itemize}
A coordination is \emph{integral} if its image contains only integers.

Given a coordination $y$ of an ordered storyline instance, its \emph{total (linear) wiggle height} (see~\cref{fig:notationwiggle}b) is defined as
\begin{equation}
    \LWH(y)=\sum_{t=1}^{\ell-1}\sum_{c\in \Ac(t)\cap \Ac(t+1)}|y_{t,c}-y_{t+1,c}|.
\end{equation}
Its \emph{total quadratic wiggle height} is defined as
\begin{equation}
    \QWH(y)=\sum_{t=1}^{\ell-1}\sum_{c\in \Ac(t)\cap \Ac(t+1)} (y_{t,c}-y_{t+1,c})^2.
\end{equation}
Finally, its \emph{total wiggle count} is defined as
\begin{equation}
    \WC(y)=\left|\{(t,c)\mid 1 \le t\le \ell-1, c\in \Ac(t)\cap \Ac(t+1), y_{t,c}\ne y_{t+1,c}\}\right|.
\end{equation}
Note that this matches our intuitive definition of wiggles as inflection points of the character curves.

\begin{problem}[\WiggleHProb]
Given an ordered storyline instance and $\di,\da\in \mathbb{R}$, find a nice coordination minimizing the total wiggle height.
\end{problem}

\begin{problem}[\QdrWiggleHProb]
Given an ordered storyline instance and $\di,\da\in \mathbb{R}$, find a nice coordination minimizing the total quadratic wiggle height.
\end{problem}

\begin{problem}[\WiggleCProb]
Given an ordered storyline instance and $\di,\da\in \mathbb{R}$, find a nice coordination minimizing the total wiggle count.
\end{problem}

\section{Wiggle Height Minimization}\label{sec:twh-lp}
In this section, we describe a linear program (LP)
for \WiggleHProb
and a quadratic program (QP) for \QdrWiggleHProb.
We also present observations about these programs.

\subsection{Linear Program for \WiggleHProb}\label{section:linearprogram}
Note that linear programs are polynomial-time solvable \cite{DBLP:journals/combinatorica/Karmarkar84}.
The linear program makes use of the following variables.
\begin{itemize}
    \item for $t\in [\ell],c\in \Ac(t)$, \emph{$y_{t,c}$} encodes the y-coordinate of character $c$ at time step $t$.
    \item for $t \in [\ell-1], c\in \Ac(t)\cap \Ac(t+1)$, \emph{$w_{t,c}$} encodes the wiggle height of character~$c$ between time steps~$t$ and $t+1$.
\end{itemize}
For our LP formulation we define, for each $t\in [\ell]$, the sets
\begin{align*}
N(t)&=\{(c,c')\mid c\text{ and }c'\text{ are characters that are consecutive in }\pi_t\text{ and }c\prec_t c'\},\\
N_{\mathcal{M}}(t)&=N(t)\cap \{(c,c')\mid c\text{ and }c'\text{ are in the same meeting at time step }t\},\text{ and }\\
N_A(t)&=N(t)\setminus N_{\mathcal{M}}(t)     
\end{align*}
The LP is given as follows.
\begin{align*}
    \text{Minimize } \sum_{t\in [\ell],c\in \Ac(t)} w_{t,c} \\
    \text{subject to } \hspace*{4ex}
    y_{t,c'}-y_{t,c}  & =\di          &  & \text{for all } t\in [\ell], (c,c')\in N_{\mathcal{M}}(t)       \tag{Cons-\di}\label{cons:di}                           \\
    y_{t,c'}-y_{t,c}  & \ge \da       &  & \text{for all } t\in [\ell], (c,c')\in N_A(t)       \tag{Cons-\da}\label{cons:da}                             \\
    y_{t,c}-y_{t+1,c} & \le w_{t,c} &  & \text{for all } t\in [\ell-1],c\in \Ac(t)\cap \Ac(t+1)\tag{W1}\label{cons:W1} \\
    y_{t+1,c}-y_{t,c} & \le w_{t,c}   &  & \text{for all } t \in [\ell-1],c\in \Ac(t)\cap \Ac(t+1)\tag{W2}\label{cons:W2} \\
    y_{t,c}           & \ge 0         &  & \text{for all }t\in [\ell],c\in\Ac(t)\tag{G0}\label{cons:G0}
\end{align*}
Constraints \eqref{cons:da} and \eqref{cons:di} ensure that the computed coordination is $(\di,\da)$-nice.
Together with the objective, \eqref{cons:W1} and \eqref{cons:W2} ensure that the wiggle value is computed correctly as $w_{t,c} = |y_{t,c}-y_{t+1,c}|$. 
The objective adds up the wiggle values. 
Our LP has the following nice property.

\begin{proposition}\label{thm:integerpolytope}
    If $\di,\da\in \mathbb{N}$, then all extreme points defined by the wiggle height minimization polytope are integer.
\end{proposition}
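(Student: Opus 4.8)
The plan is to exploit that the only genuinely constraining part of the polytope lives in the $y$-variables, where the constraints form an integral difference system, while the $w$-variables are merely auxiliary. Fix an extreme point $(y^\ast,w^\ast)$ of the polytope, and recall that a point is a vertex exactly when the constraints tight at it have rank equal to the number of variables. First I would pin down the $w$-variables: each $w_{t,c}$ occurs only in \eqref{cons:W1} and \eqref{cons:W2}, both of which lower-bound it, and in no upper bound. Hence if neither \eqref{cons:W1} nor \eqref{cons:W2} were tight at $(y^\ast,w^\ast)$, one could perturb $w_{t,c}$ up and down while staying feasible, contradicting extremality. So at a vertex at least one of the two is tight, giving $w^\ast_{t,c}=|y^\ast_{t,c}-y^\ast_{t+1,c}|$. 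Consequently $w^\ast$ is fully determined by $y^\ast$, and since the absolute difference of two integers is an integer, it suffices to prove that $y^\ast$ is integral.

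Next I would isolate the constraints that determine $y^\ast$. Using a tight W-constraint to express each $w_{t,c}$ as a signed difference of two $y$-coordinates, I substitute these out. The residual constraints involving only the $y$-variables are then all difference constraints with integral right-hand sides: every equality \eqref{cons:di} ($y_{t,c'}-y_{t,c}=\di$), every tight \eqref{cons:da} ($y_{t,c'}-y_{t,c}=\da$), every tight \eqref{cons:G0} ($y_{t,c}=0$), and, for each $(t,c)$ at which both \eqref{cons:W1} and \eqref{cons:W2} are tight, the equality $y_{t,c}=y_{t+1,c}$ (which is what the two tight W-constraints force once $w_{t,c}$ has been eliminated). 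Call this difference system $S$. A short uniqueness argument then shows that $y^\ast$ is the unique solution of $S$: any $\tilde y$ satisfying $S$ extends, via the same tight W-constraints, to a point $(\tilde y,\tilde w)$ satisfying all constraints tight at $(y^\ast,w^\ast)$ as equalities, so by extremality $(\tilde y,\tilde w)=(y^\ast,w^\ast)$ and thus $\tilde y=y^\ast$.

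Finally I would finish with total unimodularity applied only to $S$. Every row of the coefficient matrix of $S$ has entries in $\{-1,0,1\}$ with at most one $+1$ and one $-1$, so this matrix is a row-submatrix of the node--arc incidence matrix of a directed graph whose nodes are the $y$-variables together with one ground node (for the rows $y_{t,c}=0$); such incidence matrices are totally unimodular. Because $y^\ast$ is the unique solution of $S$, its coefficient matrix has full column rank, so I can select a nonsingular square subsystem $B'y=b'$; total unimodularity gives $|\det B'|=1$, and $b'$ is integral since $\di,\da\in\mathbb{N}$, whence $y^\ast=(B')^{-1}b'$ is integral by Cramer's rule, and then so is $w^\ast$.

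The main obstacle, and the reason the statement is not immediate, is that the full constraint matrix is \emph{not} totally unimodular: the rows \eqref{cons:W1} and \eqref{cons:W2} restricted to the columns $y_{t,c}$ and $w_{t,c}$ form the submatrix $\bigl(\begin{smallmatrix}1&-1\\-1&-1\end{smallmatrix}\bigr)$, which has determinant $-2$. Thus integrality cannot be read off from the whole system, and the crux of the argument is precisely the reduction above: first forcing the auxiliary variables $w$ to their defining values at a vertex, and only then invoking total unimodularity for the pure difference system in the $y$-variables.
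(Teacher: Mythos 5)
Your proof is correct, but it follows a genuinely different route from the paper's. The paper argues by perturbation: it first notes (as you do) that at an extreme point at least one of \eqref{cons:W1}, \eqref{cons:W2} is binding and that the minimum $y$-value is $0$, then builds the graph on the $y$-variables whose edges join pairs with integer difference, and shows that any connected component containing non-integer values could be shifted up or down by a small $\epsilon$, contradicting extremality; connectivity then forces integrality since the minimum value is $0$. You instead invoke the rank characterization of vertices, force $w^\ast_{t,c}=|y^\ast_{t,c}-y^\ast_{t+1,c}|$, eliminate the $w$-variables, observe that the remaining tight constraints form a difference system $S$ with integral right-hand sides whose matrix is totally unimodular, and conclude by uniqueness plus Cramer's rule. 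Each approach has its merits: the paper's argument is elementary and self-contained, and the same graph-plus-shifting idea is reused almost verbatim for \cref{lemma:wigglecountinteger}, where no polytope structure is available (one only shifts a given coordination to integer coordinates), so it serves double duty. Your argument leans on standard polyhedral machinery but is more explicit about what extremality gives you -- in particular, your uniqueness step silently recovers the paper's ``minimum equals $0$'' observation, since a pure difference system without a tight \eqref{cons:G0} row would be translation-invariant and could not have a unique solution. Your closing remark that the full constraint matrix is \emph{not} totally unimodular (the $2\times 2$ submatrix of \eqref{cons:W1}, \eqref{cons:W2} on columns $y_{t,c}, w_{t,c}$ has determinant $-2$) is a valuable addition that neither appears in nor is needed by the paper's proof: it explains precisely why the elimination of the auxiliary $w$-variables is the crux and why a naive appeal to total unimodularity of the whole system would fail. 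One cosmetic point: the matrix of $S$ is really the \emph{transpose} of (a column-deleted) node--arc incidence matrix, with rows playing the role of arcs; since total unimodularity is preserved under transposition and submatrices, your conclusion stands.
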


\begin{proof}
    Assume assignments to $y$- and $w$-variables corresponding to an extreme point.
    It is clear that $\min \{y_{t,c}\mid t\in [\ell],c\in \Ac(t)\}=0$, as otherwise there is some small $\epsilon>0$ such that each $y$-variable can be simultaneously increased or decreased by $\epsilon$, contradicting that we are at an extreme point. It is also clear that, for each $t \in [\ell-1]$ and $c\in \Ac(t)\cap \Ac(t+1)$, either \eqref{cons:W1} or \eqref{cons:W2} is binding.
    Now, let $G$ be the graph defined by $V(G)=\{y_{t,c}\mid t\in [\ell],c\in \Ac(t)\}$ and $E(G)=\{y_{t,c}y_{t',c'}\mid y_{t,c}-y_{t',c'}\in \mathbb{Z}\}$. If $G$ is connected, we are done as we have already seen that the smallest $y$-variable is integer. Otherwise, consider some connected component $C$ of $G$ such that $V(C)\not\subseteq \mathbb{Z}$.
    Note that it is possible to add to all $y_{t,c}\in V(C)$ some small $\epsilon>0$, resulting in a different point enclosed in the polytope. Equivalently, there exists $\epsilon'>0$ that can be subtracted from all $y_{t,c}\in V(C)$, witnessing in fact that we are not at an extreme point. Thus, $G$ is connected and the extreme point is integer.
\end{proof}

\subsection{Quadratic Program for \QdrWiggleHProb}\label{sub:quadratic}

The quadratic program makes use of the same variables and constraints as the LP above.
In fact, we just replace the objective function by the definition of quadratic wiggle height
\begin{align*}
    \text{Minimize } \quad  \sum_{t\in [\ell],c\in \Ac(t)} w_{t,c}^2.
\end{align*}

It is easy to see that, in matrix form, the objective 
has diagonal entries that are either~0 or $w_{t,c}$ (for some $t\in [\ell-1]$ and $c\in \Ac(t)$).
The entries of type~$w_{t,c}$ are non-negative by definition, hence the matrix of the objective is positive semidefinite.
QPs with such an objective and with linear constraints can be solved in polynomial time~\cite{bv-co-04}.

Note that we can replace each of the summands $w_{t,c}^2$ in the objective by
$(y_{t,c} - y_{t+1,c})^2$, which is the same as $(y_{t+1,c} - y_{t,c})^2$ (for all
$t \in [\ell-1], c \in \Ac(t) \cap \Ac(t+1)$).  Then we can drop the $w$-variables and 
the constraints \eqref{cons:W1} and \eqref{cons:W2}.

\section{Wiggle Count Minimization}
\label{sec:wcmin}

In this section we consider the number of wiggles in the visualization
as optimization criterion.
We show that for $\ell=2$ time steps, \WiggleCProb\ is polynomial-time solvable, while it is \NP-hard in the general case. We also present an integer linear program to solve \WiggleCProb.

\subsection{Polynomial Cases}
When ignoring the restriction that characters in meetings need to be equally spaced, finding a storyline visualization with the fewest number of wiggles is polynomial-time solvable. 
The problem is equivalent to finding the longest common subsequence between each pair of adjacent orderings $\pi_t,\pi_{t+1}$, $t\in [\ell-1]$.
Integer coordinates can be obtained by scaling all y-coordinates accordingly.
A similar observation has been made in \cite{liu_storyflow_2013}.
\begin{observation}
    Given an ordered storyline instance, one can compute in polynomial time a valid integral coordination that minimizes the total wiggle count.
\end{observation}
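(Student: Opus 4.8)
The plan is to establish matching lower and upper bounds on the total wiggle count, both expressed through longest common subsequences (LCS) of adjacent orderings, and to argue that the optimum is attained by a valid integral coordination computable in polynomial time. Throughout I use only validity (the niceness/equal-spacing constraints are dropped here, as announced).

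First I would prove the lower bound locally, for each pair of adjacent time steps. Fix $t\in[\ell-1]$ and take any valid coordination $y$. The characters in $S_t=\{c\in\Ac(t)\cap\Ac(t+1):y_{t,c}=y_{t+1,c}\}$ that do not wiggle between $t$ and $t+1$ have the same relative order in $\pi_t$ and $\pi_{t+1}$, since $y_{t,c}$ and $y_{t,c'}$ equal $y_{t+1,c}$ and $y_{t+1,c'}$ and validity forces $c\prec_t c'\iff y_{t,c}<y_{t,c'}$. Hence $S_t$ is a common subsequence of $\pi_t$ and $\pi_{t+1}$, so $|S_t|\le\operatorname{LCS}(\pi_t,\pi_{t+1})$ and the number of wiggles between $t$ and $t+1$ is at least $|\Ac(t)\cap\Ac(t+1)|-\operatorname{LCS}(\pi_t,\pi_{t+1})$. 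Summing over $t$ yields a lower bound on $\WC(y)$ valid for every coordination, and each LCS is computable by the standard dynamic program in polynomial time.

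Next I would construct a coordination meeting this bound. For each $t$ fix a longest common subsequence $L_t$ of $\pi_t$ and $\pi_{t+1}$ and require $y_{t,c}=y_{t+1,c}$ for all $c\in L_t$. I assign the $y$-values by a single left-to-right sweep, starting from any valid real assignment at step $1$. Assuming valid coordinates are fixed up to step $t$, the characters of $L_t$ keep their coordinates at step $t+1$; these remain consistent because $\pi_t$ and $\pi_{t+1}$ induce the same order on $L_t$, and the remaining characters of $\Ac(t+1)$ can then be inserted into the appropriate gaps so that the full order equals $\pi_{t+1}$, which is always possible over the reals. The resulting coordination is valid and keeps at least $|L_t|=\operatorname{LCS}(\pi_t,\pi_{t+1})$ characters non-wiggling between each $t$ and $t+1$, so by the lower bound it is optimal and the minimum equals $\sum_{t}\bigl(|\Ac(t)\cap\Ac(t+1)|-\operatorname{LCS}(\pi_t,\pi_{t+1})\bigr)$.

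The main obstacle is the simultaneity of these local optima: the per-pair LCS computations look independent, yet all of them constrain the same $y$-coordinates at each time step, so it is not a priori clear that the local optima are jointly attainable. The sweep resolves this precisely by exhibiting one valid coordination that meets every local bound at once, sidestepping any need to directly verify that the combined system of within-layer strict orders and cross-layer equalities is consistent. Finally, to obtain integrality I would choose every inserted value in the sweep to be rational (for instance a midpoint of the relevant gap); scaling the resulting rational coordination by a common denominator preserves both the strict orders within each layer and the equalities across layers, and hence produces a valid integral coordination of the same, optimal, wiggle count.
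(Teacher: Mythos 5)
Your proposal is correct and follows essentially the same route as the paper, which simply notes that (absent the equal-spacing constraints) the problem is equivalent to computing a longest common subsequence between each pair of adjacent orderings $\pi_t,\pi_{t+1}$ and that integer coordinates follow by scaling. Your write-up fleshes out exactly this idea---the per-transition LCS lower bound, the left-to-right sweep showing all local optima are simultaneously attainable, and the rational-then-scale integrality step---all of which the paper leaves implicit.
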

Rather surprisingly, we obtain the following result.  It shows that we
can compute the ``maximum wiggle-free'' storyline---a set of
characters that all can be realized without a single wiggle---in
polynomial time.
\begin{restatable}{theorem}{maxbendless}\label{theorem:maxbendless}
  Let $(\mathcal{C}, T, \mathcal{M}, A, (\pi_t)_{t \in [\ell]})$ be an
  ordered storyline instance.  Let~$\mathcal{C}'$ be the set of
  characters in~$\mathcal{C}$ that are active at all time steps, i.e.,
  $A(c)=[\ell]$ for all $c\in \mathcal{C}'$.  Then, given a pair
  $(\di,\da)$, a $(\di,\da)$-nice coordination with the largest
  wiggle-free subset of~$\mathcal{C}'$ can be computed in
  $\mathcal{O}(|\mathcal{C}|^2\cdot \ell)$ time.
\end{restatable}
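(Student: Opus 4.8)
The plan is to reduce the problem to computing a longest path in a directed acyclic graph (DAG) on the characters. First I would observe that a character $c$ is \emph{wiggle-free} in a coordination $y$ exactly if $y_{t,c}$ takes the same value at all time steps; call this common value the \emph{height} of $c$. Hence, finding the largest wiggle-free subset of $\mathcal{C}'$ amounts to selecting a maximum subset $S\subseteq\mathcal{C}'$ together with fixed heights for its members that can be extended to a $(\di,\da)$-nice coordination in which every remaining active character is free to move between consecutive time steps.

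I would isolate two necessary conditions on such a set $S$. \emph{(Order consistency.)} If $c,c'\in S$ then, since both have fixed heights, their relative order must be the same in every $\pi_t$; thus $S$ is a common subsequence of the restrictions of $\pi_1,\dots,\pi_\ell$ to $\mathcal{C}'$. \emph{(Gap feasibility.)} For two characters $c\prec c'$ that are consecutive in this common order, the fixed height difference $g=y_{t,c'}-y_{t,c}$ must simultaneously satisfy, for every time step $t$, the spacing constraints imposed by the block of active characters lying between $c$ and $c'$ in $\pi_t$. This yields a lower bound on $g$ at each $t$ (the sum of the minimal admissible consecutive spacings, $\da$ or $\di$), and whenever $c$ and $c'$ belong to a common meeting at time $t$ it even forces $g$ to the exact value $(\mathrm{pos}_t(c')-\mathrm{pos}_t(c))\cdot\di$. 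I call $(c,c')$ \emph{compatible} if these requirements admit a common value of $g$, which I can decide by scanning all $\ell$ time steps.

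The key claim to establish is that order consistency together with \emph{consecutive}-pair compatibility is also \emph{sufficient}. Here the crucial structural fact is that the gaps decouple: writing $S=\{s_1\prec\dots\prec s_k\}$ and placing $s_i$ at height $\sum_{j<i}g_j$ for feasible gap values $g_j$, the characters strictly between $s_i$ and $s_{i+1}$ in a given $\pi_t$ occupy an independent vertical slab of width $g_i$ and can be spaced inside it, while characters below $s_1$ or above $s_k$ extend freely; this accounts for all active characters, including those outside $\mathcal{C}'$. Because meetings are contiguous in each $\pi_t$, any exact spacing forced by a meeting between two selected characters is already a pairwise constraint between them, so no new coupling appears for non-consecutive pairs. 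This is the step I expect to require the most care: the bookkeeping that shows the additive heights respect every lower bound and every meeting-forced equality across all time steps simultaneously, and that compatibility of consecutive pairs is genuinely independent of the rest of $S$.

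Given the characterization, I would build a DAG on $\mathcal{C}'$ with an arc $c\to c'$ whenever $c\prec_t c'$ for all $t$ and $(c,c')$ is compatible; this graph is acyclic because its arcs refine the order of, say, $\pi_1$, which therefore serves as a topological order. A maximum wiggle-free subset then corresponds to a vertex-maximum path, computable by a single dynamic-programming pass. For the running time I would precompute, for each time step, position arrays and prefix sums counting meeting-adjacencies, so that testing compatibility of a fixed pair at a fixed time step takes $\mathcal{O}(1)$; the arc tests then cost $\mathcal{O}(|\mathcal{C}|^2\cdot\ell)$ in total and dominate the $\mathcal{O}(|\mathcal{C}|^2)$ longest-path computation, giving the claimed bound.
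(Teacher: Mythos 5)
Your proposal matches the paper's proof essentially step for step: both characterize pairs of always-order-consistent characters by intersecting, over all time steps, the feasible gap intervals (a lower bound of summed $\di$/$\da$ spacings, forced to an exact multiple of $\di$ when the pair shares a meeting), build the resulting DAG on $\mathcal{C}'$, and extract a longest path in $\mathcal{O}(|\mathcal{C}|^2)$ time after $\mathcal{O}(|\mathcal{C}|^2\cdot\ell)$ precomputation of meeting-adjacency counts. Your explicit decoupling argument for why consecutive-pair compatibility suffices is a welcome elaboration of a step the paper treats only implicitly, but it is the same approach.
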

\begin{proof}
    The idea is to find pairs of characters that can be realized simultaneously without wiggle.
    Clearly, such pairs must have the same relative orders in all $\pi_i$, $i\in [\ell]$.
    Thus, define $\mathcal{C}_<^2=\{(c,c')\in \mathcal{C}'^2\mid \forall i\in[\ell]: c\prec_i c'\}$ exactly as these pairs.
    Now, for each $(c,c')\in \mathcal{C}_<^2$ and $i\in [\ell]$, we compute
    $\operatorname{miny}_i(c,c')$ which is the minimum required space in y-direction for the sequence of characters between $c$ and $c'$ in $\pi_i$, including $c$ and $c'$. The value $\operatorname{maxy}_i(c,c')$ is defined analogously as the maximum required space.
    To compute these values, we define, for $i\in \ell$ and $1\le j_1<j_2\le |\Ac(i)|$, the following auxiliary table:
    \begin{align}
      C^{\mathrm{cons}}_i(j_1,j_2) &= |\{j_1\le k< j_2\mid \exists
      M\in \mathcal{M}(i):\{\pi_i(k),\pi_i(k+1)\}\subseteq \charac(M)\}|.
      \intertext{Now $\operatorname{miny}_i(c,c')$ and $\operatorname{maxy}(c,c')$
      can be computed as follows:}
      \operatorname{miny}_i(c,c') &= C^{\mathrm{cons}}_i(\pi_i^{-1}(c),\pi_i^{-1}(c'))\cdot \di+(\pi^{-1}_i(c')-\pi^{-1}(c)-C^{\mathrm{cons}}_i(\pi_i^{-1}(c),\pi_i^{-1}(c'))\cdot \da,\\
      \operatorname{maxy}_i(c,c') &=
      \begin{cases}
        \di\cdot (\pi_i^{-1}(c')-\pi_i^{-1}(c)) & \text{ if }\exists M\in
        \mathcal{M}(i):\{c, c'\}\subseteq \charac(M) \\
        \infty                                 & \text{ otherwise.}
      \end{cases}
      \intertext{The values $C^\mathrm{cons}_i$ can be computed in time
      $\mathcal{O}(|\mathcal{C}|^2\cdot \ell)$ by realizing that, for $j_1<j_2$,}
      C^\mathrm{cons}_i(j_1,j_2) &=
      \begin{cases}
        C^\mathrm{cons}_i(j_1,j_2-1)+1&\text{ if }\exists
        M\in \mathcal{M}(i):\{\pi_i(j_2-1),\pi_i(j_2)\}\subseteq M \\
        C^\mathrm{cons}_i(j_1,j_2-1) & \text{ otherwise}
      \end{cases}
    \end{align}
    Clearly, the values $\operatorname{maxy}$ and
    $\operatorname{miny}$ can be computed in
    $\mathcal{O}(|\mathcal{C}|^2\cdot \ell)$ time.
    Define the directed acyclic graph $G$ with vertex set
    $\mathcal{C}'$ that has, for every $(c,c')\in \mathcal{C}_<^2$, an
    arc from $c$ to $c'$ if $\bigcap_{i\in \ell}
    [\operatorname{miny}_i(c,c'),\operatorname{maxy}_i(c,c')]
    \ne \emptyset$.
    Note that each arc corresponds exactly to a pair of characters
    that can be realized simultaneously without wiggle.
    Since $G$ is acyclic, we can compute a longest path~$P$ in~$G$
    in time $\mathcal{O}(|\mathcal{C}|^2)$.  Note that the characters
    of~$P$ represent a wiggle-free solution that consists of
    $|V(P)|+1$ characters from~$\mathcal{C}'$. A corresponding coordination can for example be computed by using the linear program from \cref{section:linearprogram} with additional equality constraints for characters that do not wiggle.
\end{proof}
The next result follows directly, as for $\ell=2$, the only characters
that can have (at most one) wiggle, are active at both
time steps.
\begin{corollary}
    \WiggleCProb\ is solvable in time $\mathcal{O}(|\mathcal{C}|^2)$ for $\ell=2$.
\end{corollary}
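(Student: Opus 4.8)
The plan is to reduce \WiggleCProb with $\ell = 2$ directly to the wiggle-free maximization problem already solved by \cref{theorem:maxbendless}. First I would observe that when $\ell = 2$ there is only a single pair of consecutive time steps, namely $(1,2)$, so each character contributes at most one unit to $\WC(y)$, and only characters with $c \in \Ac(1) \cap \Ac(2)$ appear in the defining sum at all. Since the only time steps are $1$ and $2$ and each $A(c)$ is a set of consecutive time steps, $c \in \Ac(1) \cap \Ac(2)$ forces $A(c) = [\ell]$; hence this intersection is exactly the set $\mathcal{C}'$ of characters active at all time steps. Consequently $\WC(y)$ equals the number of characters $c \in \mathcal{C}'$ that wiggle, i.e., for which $y_{1,c} \neq y_{2,c}$.

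The key step is the identity $\WC(y) = |\mathcal{C}'| - |\{c \in \mathcal{C}' : y_{1,c} = y_{2,c}\}|$, which shows that minimizing $\WC(y)$ over nice coordinations is equivalent to maximizing the number of \emph{wiggle-free} characters in $\mathcal{C}'$, that is, those with $y_{1,c} = y_{2,c}$. It then remains to argue that this maximum number of wiggle-free characters is precisely the size of the largest wiggle-free subset of $\mathcal{C}'$ delivered by \cref{theorem:maxbendless}.

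To close that gap I would argue both directions. On the one hand, the characters realized without wiggle by the coordination from \cref{theorem:maxbendless} are by construction simultaneously wiggle-free, so that coordination is nice and attains wiggle count at most $|\mathcal{C}'|$ minus the largest wiggle-free subset size. On the other hand, in any nice coordination the set of non-wiggling characters is itself a subset of $\mathcal{C}'$ realized without wiggle, so its cardinality cannot exceed the maximum wiggle-free subset size; hence no nice coordination can achieve a smaller wiggle count. This two-sided bound is the main obstacle of the argument, since it requires observing that the optimum of the auxiliary longest-path problem from \cref{theorem:maxbendless} \emph{matches} the combinatorial optimum of \WiggleCProb exactly, rather than merely bounding it.

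Finally, for the running time I would instantiate \cref{theorem:maxbendless} with $\ell = 2$: its guarantee of $\mathcal{O}(|\mathcal{C}|^2 \cdot \ell)$ time becomes $\mathcal{O}(|\mathcal{C}|^2)$, and the same procedure also returns a corresponding optimal coordination, which yields the claimed bound.
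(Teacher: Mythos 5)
Your proof is correct and takes essentially the same approach as the paper: the corollary is derived directly from \cref{theorem:maxbendless} by observing that for $\ell=2$ the only characters that can wiggle (at most once each) are those active at both time steps, so minimizing the wiggle count is equivalent to maximizing the wiggle-free subset of $\mathcal{C}'$, and the theorem's $\mathcal{O}(|\mathcal{C}|^2\cdot\ell)$ bound becomes $\mathcal{O}(|\mathcal{C}|^2)$. Your write-up simply makes explicit the equivalence that the paper states as following ``directly.''
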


\subsection{NP-Hardness} %
\label{section:hardness}

In this section, we prove the following result with a gadget-based reduction. We first describe the source problem of the reduction, then the required gadgets, and finally the full reduction. 

\begin{theorem}\label{thm:wigglecounthardness}
    The decision variant of \WiggleCProb\ is \NP-complete.
\end{theorem}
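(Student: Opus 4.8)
The plan is to first settle membership in \NP\ and then prove hardness by reduction from \Satprob. For membership I would not guess real-valued coordinates but rather the \emph{wiggle pattern}: given a threshold~$k$, a certificate is a set $S$ of pairs $(t,c)$ with $c\in\Ac(t)\cap\Ac(t+1)$ that are declared wiggle-free, where the complement of $S$ has size at most~$k$. To verify such a certificate, I add the equalities $y_{t,c}=y_{t+1,c}$ for all $(t,c)\in S$ to the niceness constraints \eqref{cons:di} and \eqref{cons:da} of the LP in \cref{section:linearprogram}, and test feasibility of the resulting linear program in polynomial time. A nice coordination with $\WC(y)\le k$ exists if and only if some such~$S$ yields a feasible system, so \WiggleCProb\ lies in \NP.

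\textbf{Reduction.} For hardness I would reduce from \Satprob, exploiting that a storyline drawing is inherently planar with time on the x-axis: the variable row, the positive clauses above it, and the negative clauses below it can be laid out directly in the (time, y)-plane. I would design three gadgets. A \emph{variable gadget} offers a binary, locally wiggle-free choice: across a block of consecutive layers I arrange the fixed orderings $\pi_t$ so that two characters must exchange their vertical positions, which forces exactly one of two candidate wiggle-free subsequences to be selected; both choices incur the same baseline cost, and selecting one encodes setting the variable to true and the other to false. A \emph{wire gadget} chains such forced exchanges so that the chosen value propagates, with any attempt to switch value mid-wire costing at least one extra wiggle. Finally, a \emph{clause gadget} collects the three incoming literal wires at a common layer and is configured so that a single additional wiggle is unavoidable unless at least one incoming wire carries the satisfying value, matching the monotone structure (positive clauses read the ``up'' state, negative clauses the ``down'' state).

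\textbf{Correctness and threshold.} With the gadgets in place I would set the threshold $k$ to the total baseline cost contributed by all variable and wire gadgets, and argue both directions. If the formula is satisfiable, I realize each variable gadget in the state given by a satisfying assignment, propagate it consistently along the wires, and verify that every clause gadget avoids its extra wiggle, achieving $\WC(y)=k$. Conversely, from any nice coordination with $\WC(y)\le k$ I extract a truth assignment by reading off the state of each variable gadget; the wire tightness forces this reading to be consistent, and the clause tightness forces every clause to be satisfied, so the formula is satisfiable. I would also check that the construction uses a number of characters, layers, and meetings polynomial in the formula size, and that the spacing parameters $\di,\da$ can be fixed (e.g. $\di=\da=1$) so that all gadgets fit together under the niceness constraints.

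\textbf{Main obstacle.} The delicate part will be \emph{gadget tightness}: I must ensure that no coordination beats the intended baseline by ``cheating'', for instance by letting a wiggle be shared between adjacent characters, by routing movement through the extra vertical slack permitted by the $\ge\da$ constraints, or by breaking wire consistency without paying the expected penalty. Ruling this out requires a careful choice of which characters are long-lived and how much horizontal room (layers) each gadget occupies, together with an argument that the niceness spacing pins down positions tightly enough that the only locally optimal behaviours are the two intended states. Assembling all gadgets planarly according to the monotone layout while preserving these local tightness arguments globally is where the bulk of the work lies.
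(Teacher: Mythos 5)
Your \NP-membership argument is correct and is essentially the one in the paper: a certificate listing the wiggle-free pairs $(t,c)$, verified by adding the equalities $y_{t,c}=y_{t+1,c}$ to the niceness constraints of the LP from \cref{section:linearprogram} and testing feasibility. Your hardness plan also starts on the same route as the paper (reduction from \Satprobshort, variables in a row, positive clauses above, negative clauses below, a one-extra-wiggle penalty per unsatisfied clause). However, the plan stops exactly where the proof begins: what you defer as the ``main obstacle'' --- gadget tightness --- is the entire substance of the hardness argument, and your sketch contains no mechanism that could achieve it.

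The concretely missing idea is \emph{rigidity}. Niceness forces exact spacing $\di$ only \emph{inside} meetings and merely a lower bound $\da$ elsewhere, so in your gadgets as described, whole groups of characters can translate vertically at zero wiggle cost, slack can absorb movement, and no y-coordinate is pinned down; consequently the ``state'' of a variable gadget cannot be read off, and clause penalties cannot be enforced. The paper resolves this with an explicit \emph{rigid frame}: a scaffold of long-lived frame characters tied together by large meetings (whose exact-$\di$ spacing is the only source of rigidity in the model), spanning the whole construction and fixing all coordinates up to translation; variable gadgets, single-character wires routed in narrow corridors of the frame, and clause gadgets (with fixing characters, blocking meetings, and choice meetings) all live inside this scaffold, and even then the paper must blow up the construction so that frame characters provably cannot wiggle within the budget. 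Two further tightness points your plan does not address: (i) the budget must account for crossings, which unavoidably force wiggles (the paper uses $k=k_{\mathrm{cr}}+5|\Gamma|$), and (ii) a clause with two or three satisfied literals must not be able to save \emph{two} wiggles --- otherwise one clause's savings could pay for another clause being unsatisfied and the backward direction of the correctness proof fails; the paper prevents this by slightly shifting the positions of the three wire characters inside the choice meetings. Without constructions that deliver these three properties, the reduction is a blueprint rather than a proof.
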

\begin{figure}[b]
    \centering
    \includegraphics[page=1]{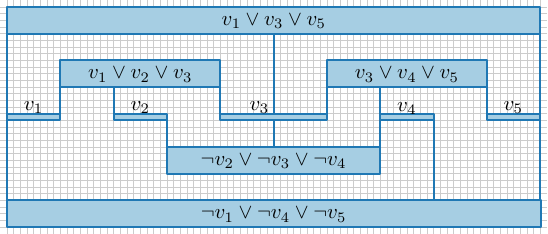}
    \caption{An instance of \Satprobshort.}
    \label{fig:inputinstance}
\end{figure}
Our proof reduces from the problem \Satprob, which is defined as
follows.  Let $\phi$ be a Boolean \textsc{3-Sat} formula in
conjunctive normal form, let~$\Upsilon$ be the set of variables
of~$\Phi$, and let $\Gamma$ be the set of clauses of~$\Phi$. %
The formula $\phi$ is \emph{monotone} if every clause of~$\phi$
contains either only positive or only negative literals.
We define the bipartite variable-clause graph $G(\phi)$, consisting of vertex set $\Gamma \cup \Upsilon$ and an edge between a clause $\gamma \in \Gamma$ and a variable $v \in \Upsilon$ if and only if $v$ occurs in $c$ (positively or negatively).
A \emph{monotone planar rectilinear embedding} $\mathcal{E}$ of $G(\phi)$ is a planar embedding of $G(\phi)$ on an integer grid (see \cref{fig:inputinstance})
such that
\begin{itemize}
    \item all vertices in $G(\phi)$ are represented by disjoint integer coordinate rectangles,
    \item the y-coordinates of all variable vertices are the same,
    \item all positive clause vertices are placed above the variable vertices,
    \item all negative clause vertices are placed below the variable vertices, and
    \item edges are straight vertical segments with integer x-coordinates connecting the borders of their corresponding rectangles.
\end{itemize}

\begin{problem}[\Satprob\ (\Satprobshort)]
Given a monotone \textsc{3-Sat} formula~$\phi$ and a monotone planar rectilinear embedding $\mathcal{E}$ of $G(\phi)$, decide whether $\phi$ is satisfiable.
\end{problem}

\Satprobshort is \NP-complete \cite{DBLP:journals/ijcga/BergK12}. 
We can assume w.l.o.g.\ that the coordinates in the embedding $\mathcal{E}$ have the following special properties:
\begin{itemize}
    \item The x-coordinates of  edges and vertical borders of the vertex rectangles are multiples of 8. 
    \item All edge lengths are multiples of 4.
    \item The clause vertices have height 4 and the variable vertices have height 1.
\end{itemize}
Notice that \cref{fig:inputinstance} shows an instance satisfying these properties.

Now, given such an instance $I_{\mathrm{SAT}}=(\phi, \mathcal{E})$ of \Satprobshort, we construct an instance $I_{\mathrm{SL}}=(\mathcal{C},T,\mathcal{M},A,(\pi_t)_{t\in [\ell]},\di,\da)$ of \WiggleCProb. We will give the precise number of wiggles $W$ for the decision variant of \WiggleCProb\ later, and set $\di=\da=1$, thus, we can assume that coordinations of $I_{\mathrm{SL}}$  contain only integer coordinates (see \cref{lemma:wigglecountinteger}).

To have a better correspondence between the two instances,  we assume in the reduction that the storyline curves are not x-monotone, but are y-monotone from bottom to top. That is, we rotate the storyline drawing by 90 degrees counterclockwise. 
All figures for the reduction reflect this.
We assume w.l.o.g.\ that the smallest y-coordinate of any vertex in $\mathcal{E}$ is~$3$. Let $y_{\max}$ be the largest y-coordinate.
We set $T=\{1,\dots, y_{\max}+2\}$. The main part of our reduction will be in $[3,y_{\max}]\subseteq T$, while the remaining four time steps will be used for an auxiliary bounding \emph{frame}.

Our reduction consists of three components: variable gadgets, clause gadgets, and a rigid frame.
For presentation purposes, we define the storyline instance graphically. That is, we define the instance in terms of one of its coordinations. The definitions of $\mathcal{C}, \mathcal{M}, A$, and $(\pi_t)_{t\in [\ell]}$ can then be inferred from this coordination.
In the presentation of the variable and clause gadget, we assume a black-box bounding frame, which we assume to have fixed coordinates in any coordination. Clearly, this frame must consist of meetings and characters.
An instance of \WiggleCProb\ that results from the instance in \cref{fig:inputinstance} can already be found in \cref{fig:fullexampleblackbox} in the variant with a black-box bounding frame and in \cref{fig:recomplete} with the full definition of the frame.  We start with the description of the variable gadget.

\subparagraph*{Variable gadget.}

Assume all variables in $\mathcal{E}$ are embedded on the two y-coordinates $y_{\mathrm{top}}$ and $y_\mathrm{bot}$.
Now consider a variable $v\in\Upsilon$. We assume that the corresponding rectangle in $\mathcal{E}$ has width $w_v$, and its horizontal edges span from $x_\mathrm{left}$ to $x_\mathrm{right}$. Then the corresponding variable gadget (see \cref{fig:vargadget}) 
\begin{figure}[tb]
    \centering
    \includegraphics[page=4]{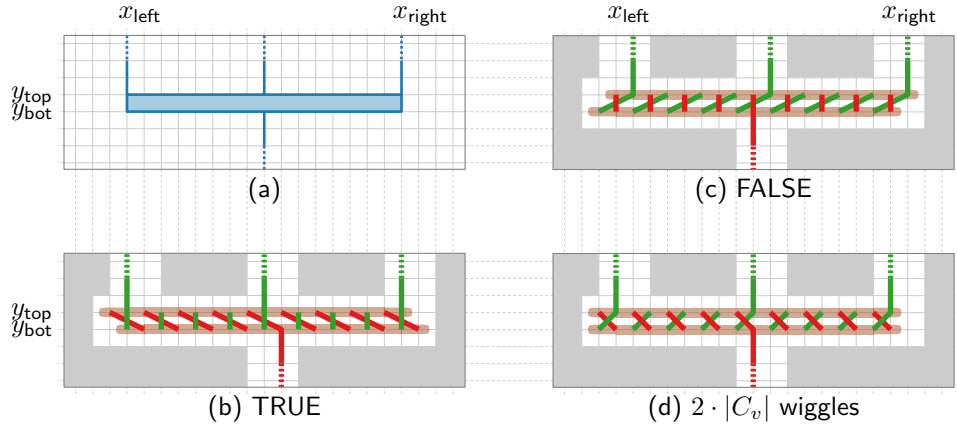}
    \caption{Illustration of a variable gadget. (a) The variable in the embedding $\mathcal{E}$. (b) The corresponding variable gadget in the TRUE state. (c) The variable gadget in the FALSE state. (d) The variable gadget in a state with too many wiggles. Character curves are green and red. meetings are brown horizontal blocks. The rigid frame is shown in gray. The dotted lines sketch the connections to the corresponding clause gadgets.}
    \label{fig:vargadget}
\end{figure}
consists of a set $C_v$ of $w_v$ characters which are ordered as in \cref{fig:vargadget} and are active at time steps $y_\mathrm{bot}$ and $y_\mathrm{top}$. This set of characters can be further partitioned into a set $C_v^\mathrm{pos}$ of \emph{positive characters} (green in \cref{fig:vargadget}) and a set $C_v^\mathrm{neg}$ of \emph{negative characters} (red in \cref{fig:vargadget}). Furthermore, using a frame as in the figure, the variable gadget is confined to a space of width $w_v+2$, with coordinates from $x_{\mathrm{left}}-1$ to $x_\mathrm{right}+1$.

For each occurrence of a variable in a clause, a character is extended towards a clause gadget, as indicated with the dashed line ends in \cref{fig:vargadget}. Positive characters are connected with positive clauses, negative characters with negative clauses. More precisely, for a vertical edge of $\mathcal{E}$ connecting $v$ with a positive (negative) clause $\gamma\in \Gamma$ at x-coordinate $x_\mathrm{left}+x_{v,\gamma}$ with $x_{v,\gamma}\in \mathbb{N}_0$, let $c_{v,\gamma}\in C_v^\mathrm{pos}$ ($c_{v,\gamma}\in C_v^\mathrm{neg}$) be the $(1+x_{v,\gamma}/2)$th (from left to right) positive (negative) character in $C_v^\mathrm{pos}$ ($C_v^\mathrm{neg})$ (the values are always positive as x-coordinates of edges are multiples of eight).
We call such a character \emph{wire character}.  This character is extended to time steps $y_\mathrm{top}+1,y_\mathrm{top}+2,\dots$ ($y_\mathrm{bot}-1,y_\mathrm{bot}-2,\dots$) until it reaches the corresponding clause gadget. This connection is along a corridor of width two defined by x-coordinates $x_\mathrm{left}+x_{v,\gamma}$ and $x_\mathrm{left}+x_{v,\gamma}+1$ (see \cref{fig:vargadget}). In a potential coordination, we say that $c_{v,\gamma}$ is \emph{satisfying} if the x-coordinate of $c_{v,\gamma}$ at $y_\mathrm{top}$ ($y_\mathrm{bot}$) is $x_\mathrm{left}+x_{v,\gamma}$, and not satisfying otherwise. We say that $x_\mathrm{left}+x_{v,\gamma}$ is the \emph{satisfying x-coordinate} of $c_{v,\gamma}$.

Due to the fixed coordinates of frames, we observe the following.
\begin{observation}
    The variable gadget has either $|C_v|$ (\cref{fig:vargadget}b,c) or $2|C_v|$ (\cref{fig:vargadget}d) wiggles between time steps $y_{\mathrm{top}}$ and $y_{\mathrm{bot}}$.
\end{observation}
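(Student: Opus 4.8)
The plan is to use the rigidity of the bounding frame to reduce the set of admissible configurations of the characters in $C_v$ to a handful, and then to count wiggles by inspection of these configurations. Since the frame occupies fixed coordinates in every coordination and the choice $\di=\da=1$ forces all coordinates to be integral, I would first count the integer positions available to the $|C_v|=w_v$ characters inside the corridor of width $w_v+2$ delimited by the two frame walls. The corridor from $x_{\mathrm{left}}-1$ to $x_{\mathrm{right}}+1$ contains $w_v+3$ integer positions, and after removing the two slots taken by the walls only $w_v+1$ remain for $w_v$ characters. Hence, by the niceness constraints (consecutive characters at distance exactly $\di$ if they share a meeting and at least $\da$ otherwise), the characters must occupy $w_v$ consecutive integer positions, leaving a single empty slot at one end. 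So at every time step the whole set sits in one of only two horizontal offsets, which I will call \emph{left-aligned} and \emph{right-aligned}.

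The decisive structural step is to show that the characters can only move as one coherent block: because each neighbouring pair is pinned at distance exactly or at least $1$ and the corridor admits only the two offsets above, there is no admissible coordination in which a proper nonempty subset of $C_v$ changes its offset while the rest stays fixed. Consequently, across each transition of the gadget either no character of $C_v$ wiggles or all $w_v$ of them do, so the wiggle count between $y_{\mathrm{bot}}$ and $y_{\mathrm{top}}$ is necessarily a multiple of $|C_v|$. I would then combine this with the wire-character constraints: the positive (resp.\ negative) wires leaving the gadget upward (resp.\ downward) force a prescribed offset at $y_{\mathrm{top}}$ (resp.\ $y_{\mathrm{bot}}$), which rules out the wiggle-free option and pins down how often the block must change offset. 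A short case distinction then yields the three pictures: in the aligned TRUE and FALSE states of \cref{fig:vargadget}(b,c) the block changes offset exactly once, giving $|C_v|$ wiggles, whereas in the misaligned state of \cref{fig:vargadget}(d) it is forced to change offset and then change back, so every character wiggles twice and the count is $2|C_v|$.

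The step I expect to be the main obstacle is the coherent ``all-or-nothing'' block lemma, i.e., proving that no single character of $C_v$ can wiggle independently of its neighbours. This is exactly where the tightness of the frame must be exploited in full: moving a strict nonempty subset of the characters would necessarily either violate the fixed order $\pi_t$, push some character beyond a frame wall, or break one of the distance constraints, and one has to verify that these obstructions cover every conceivable partial move. Once this quantization of the wiggle count to multiples of $|C_v|$ is established, the dichotomy between $|C_v|$ and $2|C_v|$ follows immediately from the bounded number of transitions spanned by the gadget together with the offsets enforced by the incoming and outgoing wires.
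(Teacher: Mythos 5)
Your first step is essentially sound: with the frame walls fixed at $x_\mathrm{left}-1$ and $x_\mathrm{right}+1$ and the brown meetings making the $w_v$ characters of $C_v$ a rigid block, there are exactly two admissible offsets (left- and right-aligned) at each of the two time steps. This is indeed the rigidity the paper's frame is designed to provide. The rest of your argument, however, rests on two claims that are incompatible both with the construction and with the statement you are proving. Your ``all-or-nothing'' block lemma would be self-defeating if it were true: if the characters of $C_v$ appeared in the same order in $\pi_{y_\mathrm{bot}}$ and $\pi_{y_\mathrm{top}}$ and could only move coherently, then placing the block at the same offset at both time steps would be a valid nice coordination with $0$ wiggles---and $0$ is not among the two values the observation allows. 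What actually excludes the wiggle-free configuration is that the two orderings \emph{differ}: inside the variable gadget each wire character crosses a partner character (the paper says so explicitly when realizing the rigid frame: ``let $c'$ be the character in its variable gadget that is involved in a crossing with $c_{v,\mathrm{cl}}$''), and a crossing between adjacent time steps forces at least one of the two characters to change its coordinate. These internal crossings also falsify the all-or-nothing claim itself: when two adjacent characters swap while the block shifts by one unit, one of the two keeps its coordinate, so a proper nonempty subset of $C_v$ can wiggle. Your attempted patch---that the wires ``force a prescribed offset'' at $y_\mathrm{top}$ and $y_\mathrm{bot}$---cannot be right either: non-satisfying wire positions are explicitly permitted (their cost is charged to the clause gadget, cf.\ \cref{lemma:wigglesclausegadgetfalse,lemma:wigglesclausegadgettrue}), and if the wires pinned both offsets the gadget would have a unique admissible state and could not encode the free TRUE/FALSE choice asserted in \cref{observation:vargadget}.

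Second, your explanation of the $2|C_v|$ case (``forced to change offset and then change back, so every character wiggles twice'') is geometrically impossible inside the gadget: variable vertices have height $1$, so $y_\mathrm{top}=y_\mathrm{bot}+1$, and the non-wire characters of $C_v$ are active at exactly these two consecutive time steps. A character active at two consecutive time steps wiggles at most once, so ``every character wiggles twice'' cannot occur between $y_\mathrm{bot}$ and $y_\mathrm{top}$; there is no second transition in which a rigid block could ``change back''. The doubling in \cref{fig:vargadget}d must instead be obtained from the crossing structure and the wire stubs: in the badly aligned states both characters of every crossing pair are forced to move (rather than one of them), and wire characters incur additional wiggles immediately outside the gadget to re-enter their fixed corridors. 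So the missing ingredient is precisely the interplay between the internal crossings and the two admissible offsets: a correct proof fixes the offset at $y_\mathrm{bot}$ and at $y_\mathrm{top}$ (four cases) and counts, per crossing pair, how many of its two characters are forced to change coordinate in each case, rather than arguing about coherent motion of a crossing-free block.
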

Furthermore, as desired, variable gadgets model a choice between TRUE and FALSE (see \cref{fig:vargadget}b,c).
\begin{observation}\label{observation:vargadget}
    If the variable gadget has $|C_v|$ wiggles, then either all positive wire characters are satisfying and all negative wire characters are not satisfying, or vice versa.
\end{observation}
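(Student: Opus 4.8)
The plan is to classify every admissible placement of the $w_v$ characters of $C_v$ into a handful of \emph{canonical} configurations and then read off the dichotomy from them. Since $\di=\da=1$, by \cref{lemma:wigglecountinteger} I may assume an integral coordination, so the characters of $C_v$ occupy distinct integer x-coordinates. Because the rigid frame fixes the two bounding positions $x_{\mathrm{left}}-1$ and $x_{\mathrm{right}}+1$, and consecutive characters are at horizontal distance at least $\da=1$, at each reference time step $y_{\mathrm{bot}}$ and $y_{\mathrm{top}}$ the set $C_v$ forms a block of $w_v$ unit-spaced characters inside a window of $w_v+1$ integer slots. Such a placement is therefore determined by the single unoccupied slot: the two extreme cases are \emph{flush left} (free slot rightmost) and \emph{flush right} (free slot leftmost), and every other placement leaves one internal gap.

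First I would pin down the correspondence between flushing and being satisfying. Using the indexing of the wire characters and the definition of their satisfying x-coordinates, a direct check shows that at a flush-left row every wire character evaluated there sits on its satisfying x-coordinate, whereas at a flush-right row every wire character evaluated there misses it by exactly one (staying inside its width-two corridor). Since the positive wire characters are evaluated at $y_{\mathrm{top}}$ and the negative ones at $y_{\mathrm{bot}}$, the configuration in which $y_{\mathrm{top}}$ is flush left and $y_{\mathrm{bot}}$ is flush right realizes exactly the state ``all positive wires satisfying, all negative wires not,'' and the mirror configuration realizes the other state of the claim.

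It then remains to show that $|C_v|$ wiggles forces precisely these two mirror configurations. The wiggle count equals the number of characters of $C_v$ whose x-coordinate changes between the reference rows, counted over the gadget's transitions; by the preceding observation this is $|C_v|$ or $2|C_v|$. I would argue that the value $|C_v|$ forces every character to wiggle exactly once: a placement with an internal gap at one row keeps the characters to the left of the gap fixed but forces those to its right to move a second time in order to re-pack at the adjacent row, pushing the total to $2|C_v|$; and a placement that agrees at both rows would cost $0<|C_v|$ wiggles, which is excluded because the frame pins the block to opposite ends of the window at $y_{\mathrm{bot}}$ and $y_{\mathrm{top}}$. Hence both rows are flush and in opposite polarity, and the correspondence above yields the claimed dichotomy.

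The hard part will be this last step: formally ruling out the ``cheap'' intermediate placements and proving that agreeing flushes at both rows is infeasible. This hinges on the precise meeting-and-frame structure that forces the block to opposite ends of the window at the two reference rows, together with careful bookkeeping showing that an internal gap genuinely \emph{doubles} the wiggle contribution of the affected characters rather than merely redistributing it. The accompanying technical point is matching the wire-character indices to the correct slots in each flush placement, so that the clean all-or-nothing behaviour of the satisfying x-coordinates used in the second step really holds for all occurrences simultaneously.
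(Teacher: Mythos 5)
Your setup---passing to integer coordinates via \cref{lemma:wigglecountinteger}, viewing the $w_v$ characters as occupying all but one of $w_v+1$ slots between the two frame columns, and classifying placements by the position of the free slot---is sensible and matches the intent of the construction. But the proof has a genuine gap, and you name it yourself in the last paragraph: the entire argument rests on the claim that a coordination with $|C_v|$ wiggles forces the block to be \emph{flush at both rows and at opposite ends}. You justify this by saying ``the frame pins the block to opposite ends of the window at $y_\mathrm{bot}$ and $y_\mathrm{top}$,'' but the frame does no such thing: it occupies the same two columns $x_\mathrm{left}-1$ and $x_\mathrm{right}+1$ at both rows and merely confines the block to the window. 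What rules out, say, a flush-left placement at \emph{both} rows is the pattern of meetings (the brown blocks in \cref{fig:vargadget}) together with the corridors of the wire characters---and that forcing behaviour is precisely the content of \cref{observation:vargadget}. Deferring it to ``the precise meeting-and-frame structure'' reduces the claim to itself, so nothing is actually proved beyond the (plausible, but itself unverified) indexing correspondence between flush placements and satisfying x-coordinates.

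There is also a concrete error in your wiggle bookkeeping. Since variable rectangles have height $1$, the rows $y_\mathrm{bot}$ and $y_\mathrm{top}=y_\mathrm{bot}+1$ are \emph{consecutive} time steps, so there is exactly one transition inside the gadget and each character of $C_v$ contributes at most one wiggle there. Your mechanism whereby characters to the right of an internal gap ``move a second time in order to re-pack at the adjacent row,'' pushing the total to $2|C_v|$, cannot occur within the gadget; a placement with an internal gap actually produces \emph{fewer} than $|C_v|$ wiggles between the two rows (in the extreme case, zero). The excess wiggles of the bad state in \cref{fig:vargadget}(d) are incurred by wire characters just outside the two rows, in their width-two corridors and at the meetings they enter there---a part of the construction your argument never touches. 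For calibration: the paper itself gives no formal proof either; the instance is defined graphically, and the observation is justified by inspection of the states (b)--(d) of \cref{fig:vargadget}. A genuine formalization would have to do exactly what you postponed: write out the gadget's meetings explicitly, show they make the block rigid and exclude the zero- and low-wiggle placements, and account for the corridor wiggles of the wire characters.
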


\subparagraph*{Clause gadget.}
\begin{figure}[htb]
    \centering
    \includegraphics[page=5]{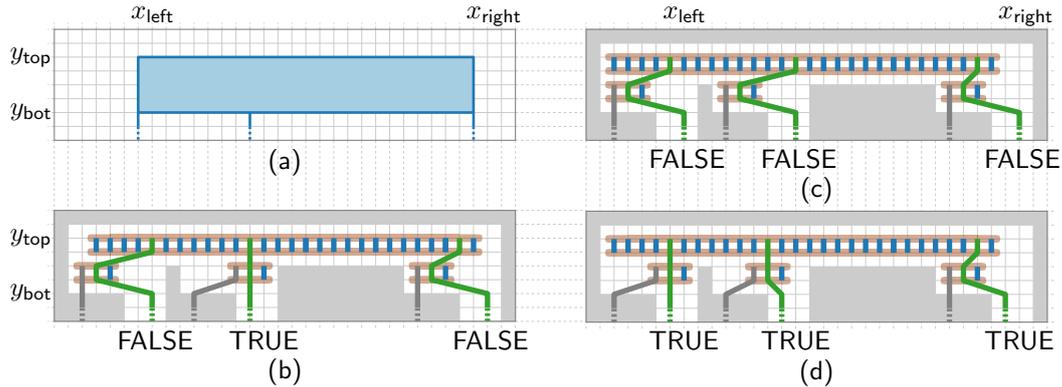}
    \caption{Illustration of a positive clause gadget. (a) The clause in the embedding $\mathcal{E}$. (b-d) The same clause gadget in three different states depending on whether its wire characters are satisfying (TRUE) or not (FALSE). A character belonging to the rigid frame is shown with the gray line.}
    \label{fig:clausegadget}
\end{figure}
At the heart of the reduction is the clause gadget, see \cref{fig:clausegadget}. We  desribe the  gadget for positive clauses only. The gadgets for negative clauses are the same but mirrored along the x-axis and can also be found in \cref{fig:fullexampleblackbox,fig:recomplete}.
Thus, let $\gamma$ be a positive clause with three variables $v_1,v_2,v_3\in \Upsilon$, such that $v_1$ is the left, $v_2$ is the middle, and $v_3$ is the right variable. Assume that the clause rectangle in $\mathcal{E}$ is defined by the coordinates $[x_\mathrm{left},x_\mathrm{right}]\times [y_\mathrm{bot},y_\mathrm{top}]$.
The clause gadget for $\gamma$ is essentially placed into the rectangle $[x_\mathrm{left}-4,x_\mathrm{right}+1]\times [y_\mathrm{bot},y_\mathrm{top}]$.\footnote{That is also partially why we need x-coordinates being multiples of eight.}
The rest of the clause gadget depends on the satisfying x-coordinates
of the characters $c_{v_i,\gamma}$ with $i\in[3]$; see
\cref{fig:clausegadget}. Thus, let $x^s_1,x^s_2,x^s_3$ be the
satisfying x-coordinates of
$c_{v_1,\gamma},c_{v_2,\gamma},c_{v_3,\gamma}$, respectively.
Each of the three wire characters is active until time step $y_\mathrm{top}$, and it is part of some meetings with \emph{clause gadget characters} (blue in \cref{fig:clausegadget}).
For each wire character $c_{v_i,\gamma}$, we have the following construction.
\begin{itemize}
    \item A \emph{fixing character} which is a character from the frame fixed at $x^s_i-4$ which is extended until $y_\mathrm{bot}+2$, see the gray line in \cref{fig:clausegadget}.
    \item The two \emph{blocking meetings} of size three at time steps $y_\mathrm{bot}+1$ and $y_\mathrm{bot}+2$ that ensure that $c_{v_i,\gamma}$ must have at least one wiggle if its x-coordinate at $y_\mathrm{bot}$ is not satisfying.
\end{itemize}
Lastly, the clause gadget has two \emph{choice meetings} of size $x_\mathrm{right}-x_\mathrm{left}+4$ at $y_\mathrm{top}$ and $y_\mathrm{top}-1$.
Each clause character $c_{v_i,\gamma}$ with $i\in[3]$ appears inside these meetings at position $x^s_i-x_\mathrm{left}+i$. The remaining positions are filled by clause characters which have the same position in both meetings. Essentially, the addition of $i$, which slightly shifts the relation of satisfying x-coordinate and position in the meeting, ensures that we cannot prevent multiple wiggles by having multiple satisfying wire characters.
The following two lemmas show that the clause gadget can be realized with the fewest wiggles if at least one of its characters is satisfying, and otherwise with one wiggle more.
We assume that wire characters do not have wiggles between their variable gadget and $y_\mathrm{bot}$, as they are either unnecessary or can be moved into the clause gadget.
\begin{lemma}\label{lemma:wigglesclausegadgetfalse}
  Assume that there exists a nice coordination in
  which none of $c_{v_1,\gamma}$, $c_{v_2,\gamma}$, and $c_{v_3,\gamma}$
  is satisfying. Then a nice coordination of the clause gadget has at
  least six wiggles, and such a coordination with exactly six wiggles
  always exists.
\end{lemma}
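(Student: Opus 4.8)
The plan is to analyze the clause gadget as a collection of essentially independent sub-problems, one per wire character, plus a global coupling through the two choice meetings at $y_\mathrm{top}$ and $y_\mathrm{top}-1$. First I would count the wiggles forced \emph{below} the choice meetings. For each wire character $c_{v_i,\gamma}$, the fixing character fixed at $x^s_i-4$ together with the two blocking meetings of size three at $y_\mathrm{bot}+1$ and $y_\mathrm{bot}+2$ forms a rigid obstacle: I would argue that if $c_{v_i,\gamma}$ arrives at $y_\mathrm{bot}$ at a non-satisfying x-coordinate, then the blocking meetings force it to move, contributing at least one wiggle to route around the obstacle before it can reach its position in the choice meetings. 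Since the hypothesis of the lemma is that \emph{none} of the three wire characters is satisfying, this yields at least three wiggles (one per wire character) in the lower part of the gadget.

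Next I would account for the wiggles forced \emph{at} the choice meetings. The two choice meetings at $y_\mathrm{top}$ and $y_\mathrm{top}-1$ have the same size and each $c_{v_i,\gamma}$ must occupy position $x^s_i-x_\mathrm{left}+i$ in both. The key point I would establish is that the position a wire character takes in the choice meeting is determined by its satisfying x-coordinate $x^s_i$, while the x-coordinate it actually occupies at $y_\mathrm{bot}$ (after routing around the blocking obstacle) is non-satisfying. I would show that reconciling the forced choice-meeting position with the non-satisfying arrival coordinate costs at least one further wiggle per wire character that genuinely ends up displaced, and that by the shift-by-$i$ trick in the position assignment it is impossible for two satisfying characters to ``share'' a single transition and save a wiggle. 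Combining the lower-part count with the choice-meeting count, and being careful not to double-count wiggles (a single wiggle of $c_{v_i,\gamma}$ might simultaneously discharge both obligations), I would arrive at the lower bound of six: intuitively, three from the blocking obstacles and three from the choice meetings, with the all-false assumption preventing any cancellation. The precise bookkeeping of which physical wiggle is charged to which obligation is what makes the number come out to exactly six.

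For the matching upper bound, I would exhibit an explicit nice coordination with exactly six wiggles when all three characters are non-satisfying. The construction routes each $c_{v_i,\gamma}$ with a controlled number of turns: one wiggle to step past its blocking obstacle and one wiggle to move into its required choice-meeting position, giving two wiggles per wire character and six in total. I would then verify that this coordination is $(\di,\da)$-nice with $\di=\da=1$ and integral (invoking the earlier observation that we may restrict to integer coordinates), that the meeting spacing constraints are satisfied, and that the clause-gadget characters filling the remaining choice-meeting positions can be routed without any additional wiggle because they occupy the same position in both choice meetings.

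The main obstacle will be the wiggle lower bound, specifically proving that one cannot cleverly reuse a single wiggle to satisfy both a blocking-obstacle obligation and a choice-meeting obligation, and that no global rerouting exploiting the shared structure of the two choice meetings beats the per-character count. I expect this to require a careful discharging or crossing-type argument that tracks, for each wire character, the sequence of x-coordinates it must assume at $y_\mathrm{bot}$, at the blocking meetings, and at the two choice meetings, and that the shift-by-$i$ offsets in the choice-meeting position assignment are exactly what rule out the degenerate case where multiple characters' displacements line up to save turns. The upper bound, by contrast, should be a routine explicit construction once the correct routing template is fixed.
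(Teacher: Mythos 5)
Your plan follows the same decomposition as the paper's proof: two wiggles per wire character (one attributable to the blocking meetings, one to the choice meetings), plus an explicit six-wiggle drawing for the upper bound. However, your lower bound has a genuine hole. You treat the fixing character at $x^s_i-4$ together with the blocking meetings as a \emph{rigid obstacle}, so that a non-satisfying $c_{v_i,\gamma}$ is ``forced to move.'' But the lemma quantifies over arbitrary nice coordinations of the gadget, and at this point of the reduction the rigidity of the frame has \emph{not} been established (it is only argued later, in the proof of \cref{lemma:reductioncorrectness}, partly by tedious blow-up arguments). Nothing in the gadget itself prevents a coordination in which the fixing character wiggles instead: it can shift towards the wire character so that the three characters of the blocking meetings are consecutive around the wire character's non-satisfying arrival coordinate, and then the wire character passes the blocking meetings with no wiggle of its own. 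Your count of ``three wiggles in the lower part, one per wire character'' then fails. The paper's proof is immune to this because it proves the bound jointly for each pair consisting of a wire character \emph{and} its fixing character: either at least two wiggles (counting both characters) already occur in the transition from $y_\mathrm{bot}$ to $y_\mathrm{bot}+1$ --- which covers precisely the scenario where the fixing character moves --- or the wire character must sit at x-coordinate $x^s_i-3$ at time step $y_\mathrm{top}-3$, and from there the shift-by-$i$ placement of positions in the choice meetings makes a further wiggle before $y_\mathrm{top}-1$ unavoidable.

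The second issue is that the step you yourself flag as ``the main obstacle'' --- showing that one physical wiggle cannot discharge both the blocking obligation and the choice obligation --- is exactly the step you do not supply, and it admits a much simpler resolution than the ``careful discharging or crossing-type argument'' you anticipate. In the paper's case analysis the two forced wiggles are localized in \emph{disjoint transitions}: the first lies in the single transition $y_\mathrm{bot}\to y_\mathrm{bot}+1$, the second strictly above it, between $y_\mathrm{top}-3$ and $y_\mathrm{top}-1$. Wiggles occurring in disjoint transitions are distinct by definition, so no reuse is possible and no global bookkeeping is needed; and since the three wire/fixing pairs are disjoint sets of characters, the per-pair bounds of two simply add up to six.
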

\begin{proof}
    \cref{fig:clausegadget} depicts such a nice coordination.
    We show that the number of wiggles of each wire character
    $c_{v_i,\gamma}$ with $i \in [3]$ and its corresponding fixing
    character is at least two between $y_\mathrm{bot}$ and
    $y_\mathrm{top}$.  If the number of wiggles is at least two between
    $y_\mathrm{bot}$ and $y_\mathrm{bot}+1$, we are done.

    Otherwise, $c_{v_i,\gamma}$ has x-coordinate $x^s_i-3$ at time step $y_\mathrm{top}-3$. Now notice that there is no nice coordination of the choice and blocking meetings such that $c_{v_i,\gamma}$ can be drawn without a wiggle between $y_\mathrm{top}-3$ and $y_\mathrm{top}-1$.
\end{proof}

\begin{lemma}\label{lemma:wigglesclausegadgettrue}
  Assume a nice coordination in which some of $c_{v_i,\gamma}$ with
  $i\in[3]$ is satisfying.  Then a nice coordination of the clause
  gadget has at least five wiggles, and such a coordination with
  exactly five wiggles always exists.
\end{lemma}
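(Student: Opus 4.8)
The plan is to establish the lower and upper bounds separately, closely mirroring the proof of \cref{lemma:wigglesclausegadgetfalse}. The upper bound is the easy direction: \cref{fig:clausegadget}(b--d) already exhibits, for each satisfying pattern, a nice coordination in which the satisfying wire character is routed with a single wiggle while each of the other two wire characters (together with its fixing character) uses exactly two wiggles, for a total of $1+2+2 = 5$. So the real work is the lower bound.

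For the lower bound I would argue per wire character, counting the wiggles of $c_{v_i,\gamma}$ between $y_\mathrm{bot}$ and $y_\mathrm{top}$. As in \cref{lemma:wigglesclausegadgetfalse}, a \emph{non}-satisfying wire is forced into at least two wiggles: the blocking meetings at $y_\mathrm{bot}+1,y_\mathrm{bot}+2$ together with the fixing character at $x^s_i-4$ force one wiggle near the bottom, and avoiding it leaves $c_{v_i,\gamma}$ at $x^s_i-3$ at $y_\mathrm{top}-3$, from where it cannot reach its slot in the choice meetings without two wiggles near the top. Hence each non-satisfying wire contributes at least two. It then remains to control the satisfying wires, and this is where the offset $+i$ in the choice-meeting positions $x^s_i-x_\mathrm{left}+i$ does its work.

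The crux, and the step I expect to be the main obstacle, is to show that \emph{at most one} wire character can be realized with only a single wiggle; any further satisfying wire is again forced up to two. Because the offset added to each wire's target slot is the \emph{distinct} value $i$ rather than a uniform shift, the three wires must move by three consecutive horizontal amounts to reach their slots, so their pairwise gaps are forced to widen by one between $y_\mathrm{bot}$ and $y_\mathrm{top}$. I would make this precise by tracking the x-coordinates forced by the two choice meetings at $y_\mathrm{top}-1,y_\mathrm{top}$ and by niceness (members exactly $\di$ apart, consecutive non-members at least $\da$ apart, here with $\di=\da=1$), and showing that two single-wiggle wires cannot simultaneously realize this widening, so at least one of them wiggles twice. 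Combining the bounds, at least two wires contribute two wiggles each and the last contributes at least one, giving at least $2+2+1 = 5$ and matching the construction. Two delicate points I would need to pin down are that the offset prevents any wire from running perfectly straight (ruling out a cheaper $0+2+2 = 4$ layout) and that the widening costs \emph{exactly} one extra wiggle over the satisfying case, so the bound is tight at five rather than collapsing to the six of \cref{lemma:wigglesclausegadgetfalse}.
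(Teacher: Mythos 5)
Your overall skeleton (upper bound read off the figure, lower bound by counting wiggles per wire with the choice-meeting offset as the crux) matches the paper's proof, but your resolution of the crux inverts the actual mechanism, and this breaks the lower bound. You claim that ``the offset prevents any wire from running perfectly straight'' and accordingly account the optimum as $1+2+2$. In the paper it is exactly the other way around: the satisfying wire $c_{v_j,\gamma}$ \emph{is} drawn without any wiggle, by sliding the two choice meetings to the position matching its offset $+j$; what the distinct offsets rule out is that \emph{two} wires are straight simultaneously, since that would require two different positions of the rigid choice-meeting block. Consequently your plan does not dispose of the real danger case, a $0+2+2=4$ layout with one straight wire --- you dismiss it by asserting straightness is impossible, which is false in the construction. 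The paper closes this case through the fixing characters, not by denying straightness: it reuses from the proof of \cref{lemma:wigglesclausegadgetfalse} the fact that whenever a fixing character stays straight between $y_\mathrm{bot}$ and $y_\mathrm{top}$, its wire must wiggle between $y_\mathrm{top}-2$ and $y_\mathrm{top}-1$; contrapositively, a straight wire forces its own fixing character to wiggle. Hence every pair (wire, fixing character) contributes at least one wiggle, non-satisfying pairs contribute at least two, and a total of at most four would force two wires to be realized entirely without wiggles, which the offsets forbid --- giving the bound $1+2+2=5$.

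The same inversion distorts your upper bound: in \cref{fig:clausegadget} the satisfying wire is straight and the fifth wiggle sits on its fixing character, not on the wire itself (the paper's construction explicitly keeps only the \emph{remaining} two fixing characters straight, which is the consistent reading that makes the count come out to five). The gap is genuine rather than cosmetic: the very statement you flag as the delicate point to pin down (no wire can run perfectly straight) is false for this gadget, so the argument as planned cannot be completed; to survive the straight-wire case the per-character accounting must be replaced by the per-pair accounting of wire together with its fixing character.
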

\begin{proof}
    For the existence, let $c_{v_j, \gamma}$ with $j\in [3]$ be some satisfying wire character. Simply draw $c_{v_j, \gamma}$ without wiggle, by adjusting the coordination of the choice meetings accordingly. The remaining two wire characters can be drawn with two wiggles, while their fixing character remains without wiggle between $y_\mathrm{bot}$ and $y_\mathrm{top}$ (see \cref{fig:clausegadget}a, where $j=2$).

    For the lower bound we have already seen in the proof of \cref{lemma:wigglesclausegadgetfalse} that if a fixing character remains without wiggle between $y_\mathrm{bot}$ and $y_\mathrm{top}$, then its corresponding wire character must have at least one wiggle between $y_\mathrm{top}-2$ and $y_\mathrm{top}-1$. Thus, for there to be at most 4 wiggles, two wire characters would need to be realized without wiggles. This is not possible because of the slight shift of positions of the wire characters in the choice meetings, a contradiction.
\end{proof}

\begin{figure}[tb]
    \centering
    \includegraphics[width=\textwidth,page=8]{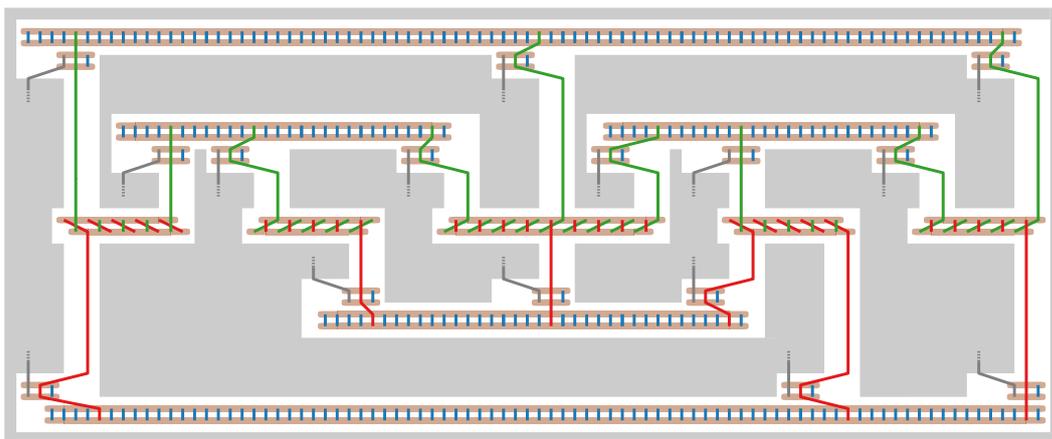}
    \caption{The complete reduction (with black-box frame) resulting
      from the %
      instance in \cref{fig:inputinstance}.}
    \label{fig:fullexampleblackbox}
\end{figure}

\subparagraph{Realizing the rigid frame}
We need to be careful in how we realize the rigid frame from \cref{fig:fullexampleblackbox}. The first idea is to define a character for each time step and each x-coordinate that we assumed to be part of the frame; character orderings are again derived directly from their x-coordinates. If we have a character at the same x-coordinate in two consecutive time steps, we actually use the same character for modeling the frame. Further, characters at consecutive x-coordinates and at the same time step will be part of the same meeting. This is indicated in \cref{fig:redincomplete}, which shows the complete reduction with this version of the frame. For each time step, frame characters which are in the same gray region, form a meeting.
\begin{figure}[tb]
    \centering
    \includegraphics[page=6,width=\textwidth]{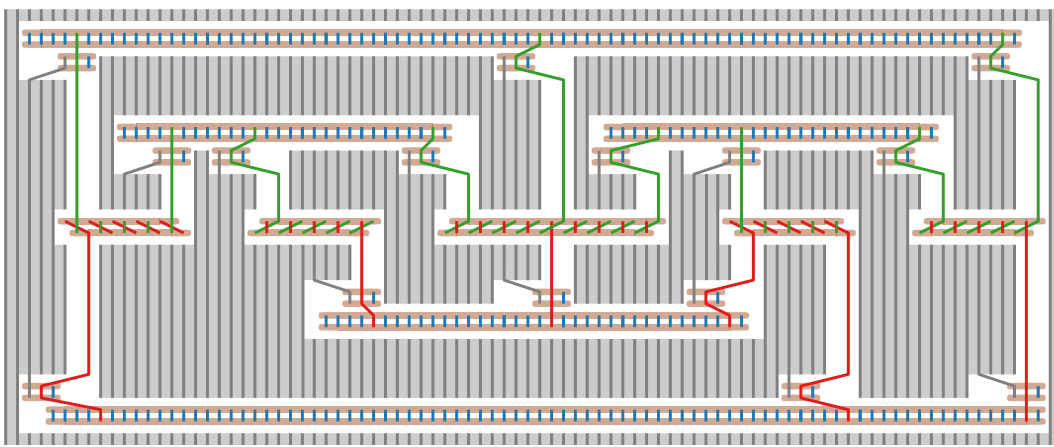}
    \caption{The complete reduction using the naive frame resulting from the \Satprobshort instance in \cref{fig:inputinstance}. Characters belonging to the frame are shown in gray. Meetings involving frame characters are not shown explicitly.}
    \label{fig:redincomplete}
\end{figure}
\begin{figure}[tb]
    \centering
    \includegraphics[page=7,width=\textwidth]{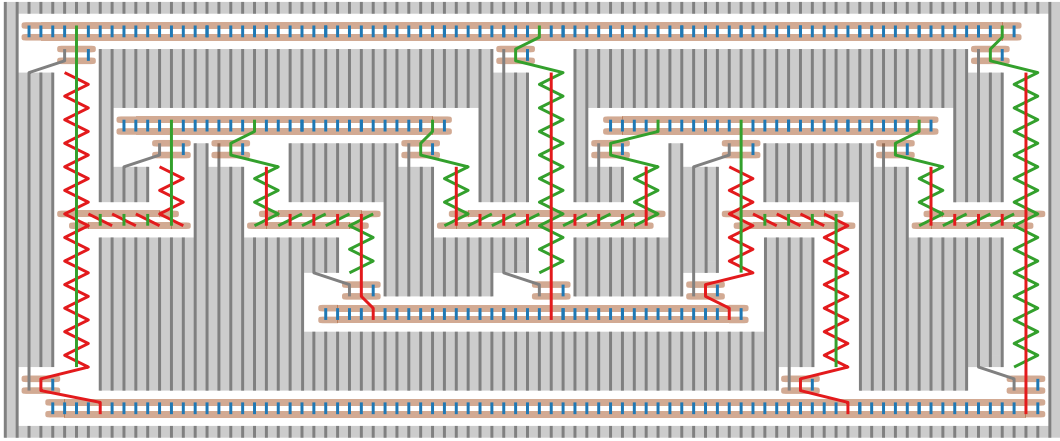}
    \caption{The complete reduction resulting from the \Satprob\
      instance in \cref{fig:inputinstance}. Meetings involving frame
      characters are not shown explicitly. The necessity of the zig-zag curves is described in the description of the rigid frame.}
    \label{fig:recomplete}
\end{figure}
However, there is a problem with this variant of the frame: the gray regions in \cref{fig:redincomplete} form disconnected components, and are thus not ``rigid enough'' to easily verify the correctness of the reduction. We can circumvent this by slightly adapting our construction, see \cref{fig:recomplete}.
Namely, for each wire character $c_{v,\mathrm{cl}}$, let $c'$ be the character in its variable gadget that is involved in a crossing with $c_{v,\mathrm{cl}}$. Then we extend $c'$ up or down until the border of the clause gadget, while the ordering of $c_{v,\mathrm{cl}}$ and $c'$ is alternating (see \cref{fig:recomplete}). Further, the pathways from variables to clauses are widened by one unit to the left.
Now, when considering two adjacent time steps, it is easy to see that the frame is indeed rigid (when assuming it does not incur too many wiggles). Indeed, if the frame was not rigid, there would need to be pathways of size two, and the clauses $c_{v,\mathrm{cl}}$ and $c'$ would have two wiggles instead of one between two consecutive time steps. 

We define $k_{\mathrm{cr}}$ as the number of crossings in the reduced storyline instance; in any coordination they will introduce a wiggle. Now, let $k=k_{\mathrm{cr}}+5|\Gamma|$. We claim the following.
\begin{restatable}{lemma}{reductioncorrectness}\label{lemma:reductioncorrectness}
    $I_{\mathrm{SAT}}$ is a yes-instance of \Satprob\ if and only if $I_{\mathrm{SL}}$ has a nice coordination with at most $k$ wiggles.
\end{restatable}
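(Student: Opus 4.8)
The plan is to prove the equivalence via a global wiggle budget: show that every nice coordination of $I_{\mathrm{SL}}$ has at least $k=k_{\mathrm{cr}}+5|\Gamma|$ wiggles, and that equality is attainable exactly when $\phi$ is satisfiable. The accounting I would use splits the wiggles of any coordination into two groups that are disjoint by construction -- the wiggles forced by the $k_{\mathrm{cr}}$ crossings of the instance (a crossing between consecutive time steps forces at least one of the two involved characters to change its y-coordinate) and the wiggles occurring inside the (crossing-free) clause gadgets. Since the gadget lemmas and observations were proved for isolated gadgets against a black-box frame, the first task is to certify that the zig-zag realization of the frame in \cref{fig:recomplete} is genuinely rigid between any two consecutive time steps, so that the frame sits at its intended coordinates at no extra cost and every local count transfers verbatim to the global instance.

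For the direction ``$\phi$ satisfiable $\Rightarrow$ coordination with $\le k$ wiggles'', I would fix a satisfying assignment $\beta$ and put each variable gadget into the corresponding TRUE/FALSE state; by the variable-gadget observations (\cref{observation:vargadget}) this realizes the gadget with only $|C_v|$ wiggles, all of which are crossing wiggles already counted in $k_{\mathrm{cr}}$, and it makes exactly the positive wires satisfying when $v$ is true and the negative wires satisfying when $v$ is false. As $\beta$ satisfies every clause, each clause gadget has at least one satisfying wire character, so \cref{lemma:wigglesclausegadgettrue} realizes it with exactly $5$ wiggles; the wires are routed straight through their width-two corridors without further wiggles. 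Because $\di=\da=1$ we may assume integrality (\cref{lemma:wigglecountinteger}), so the satisfying x-coordinates are well defined and the total is exactly $k_{\mathrm{cr}}+5|\Gamma|=k$.

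For the converse, the lower bound gives that any nice coordination has at least $k_{\mathrm{cr}}$ crossing wiggles and, disjointly, at least $5$ wiggles per clause gadget (\cref{lemma:wigglesclausegadgettrue}), hence at least $k$ in total. A coordination with at most $k$ wiggles therefore makes all of these bounds tight. Tightness forbids the defective variable-gadget state of \cref{fig:vargadget}d (which would spend $|C_v|$ extra wiggles), so \cref{observation:vargadget} applies to every variable gadget and yields a consistent truth assignment $\beta$. Tightness also forces each clause gadget down to exactly $5$ wiggles, so by the contrapositive of \cref{lemma:wigglesclausegadgetfalse} at least one wire character of each clause is satisfying; reading this back through the variable-gadget states shows that the clause contains a literal that is true under $\beta$. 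Hence $\beta$ satisfies $\phi$.

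I expect the main difficulty to be justifying that the two parts of the budget are genuinely disjoint and jointly exhaustive -- that no single wiggle can discharge both a crossing and a clause obligation, that the clause gadgets contribute no crossings of their own, and that the valid variable-gadget wiggles coincide with the crossings counted in $k_{\mathrm{cr}}$. Inside the clause gadget the delicate point, already isolated in \cref{lemma:wigglesclausegadgettrue}, is that making several wire characters satisfying at once cannot save wiggles: this is exactly what the deliberate shift of the choice-meeting positions by $+i$ enforces, and confirming that this shift always forces the extra wiggle is the crux. Establishing frame rigidity in the zig-zag realization is the necessary companion, since all local counts presuppose that the frame cannot deform to absorb wiggles elsewhere.
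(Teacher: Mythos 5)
Your proposal follows essentially the same route as the paper's proof: both directions rest on the decomposition $k = k_{\mathrm{cr}} + 5|\Gamma|$, the forward direction instantiates the gadget states from a satisfying assignment, and the backward direction combines frame rigidity with tightness of the per-gadget lower bounds (\cref{lemma:wigglesclausegadgetfalse}, \cref{lemma:wigglesclausegadgettrue}) and \cref{observation:vargadget} to read off an assignment. The difficulties you flag but do not resolve (rigidity of the zig-zag frame, disjointness of the crossing and clause budgets) are exactly the points the paper itself handles only by informal ``blow-up'' arguments, so your treatment matches the paper both in structure and in level of rigor.
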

\begin{proof}    
    The forward direction follows immediately from our construction.

    For the backward direction, assume a nice coordination of $I_{\mathrm{SL}}$ with at most $k$ wiggles.
    It is clear that none of the frame characters has a wiggle; note that the technicalities are tedious to show, but one can convince oneself by blowing up the construction to have at least $5|\Gamma|+1$ characters in each meeting containing frame characters. Then, as $k_{\mathrm{cr}}$ crossings are unavoidable, clearly no frame character can wiggle. This can, for example, be achieved by assuming that differences between x-coordinates of the \Satprob\ instance are larger than eight, but depend on $\Gamma$; note that we chose eight to be able to visualize the construction.
    Furthermore, we can assume that the connections between variable and clause gadgets use corridors of width exactly three. Again, the technicalities are tedious to show, but one can convince oneself by blowing up the construction in $y$-direction; i.e., assuming that the connections between variables and clauses in the \Satprob\ instance are multiples of some value depending on $\Gamma$.
    Hence, with the above two observations, we can indeed assume that the frame is completely rigid.

    Now, because $k_\mathrm{cr}$ wiggles are unavoidable, and we have at least $5$ wiggles for each clause gadget by \cref{lemma:wigglesclausegadgetfalse} and \cref{lemma:wigglesclausegadgettrue}, we must have exactly $5$ wiggles per clause gadget. Further, each variable gadget has $|C_v|$ wiggles for each variable $v$.
    Hence, at least one wire character per clause gadget is satisfying, and we can use the variable assignment corresponding to satisfying characters to verify the satisfiability of the \Satprob\ instance by \cref{observation:vargadget}.
\end{proof}
\begin{proof}[Proof of \cref{thm:wigglecounthardness}]
    For \NP-membership we provide a polynomial certificate and a certifier for the decision problem of deciding whether there is a nice coordination with at most $k$ wiggles. The certificate consists of a set of pairs $X\subseteq \mathcal{C}\times [\ell-1]$ of size at least $\sum_{c\in \mathcal{C}}(|A(c)|-1)-k$. Each pair $(c,t)\in X$ defines that $c$ can be drawn without wiggle between time steps $t$ and $t+1$. The certifier can certify the certificate using a linear program. The linear program is essentially the one from \cref{sec:twh-lp} using constraints \eqref{cons:di}-\eqref{cons:G0} and the additional constraints $y_{c,t}=y_{c,t+1}$ for each pair $(c,t)\in X$. If the polytope defined by the program is non-empty, then the certifier reports yes.

    \NP-hardness follows from \cref{lemma:reductioncorrectness} and the fact that \Satprobshort is \NP-hard. Clearly, the reduction is polynomial.
\end{proof}

\subsection{ILP Formulation for \WiggleCProb}\label{sub:wc-ilp}

\newcommand*\ymax{\ensuremath{Y}} %

We have established that \WiggleCProb is \NP-complete.
We now present an ILP formulation for \WiggleCProb (assuming $\di,\da\in\mathbb{N}$) that we will use
in \cref{sec:case-studies} to construct storyline drawings with minimum wiggle count.
The formulation requires the following lemma.
\begin{lemma}\label{lemma:wigglecountinteger}
    Consider an ordered storylines instance and $\di,\da\in \mathbb{N}$. For each $k\in \mathbb{N}$, if there is a nice coordination with $k$ wiggles, there is one with at most $k$ wiggles using integer coordinates.
\end{lemma}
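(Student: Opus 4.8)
The plan is to reduce the statement to the integrality result of \cref{thm:integerpolytope}. Suppose we are given a nice coordination $y$ with at most $k$ wiggles, and let $X=\{(t,c)\mid c\in \Ac(t)\cap \Ac(t+1),\ y_{t,c}=y_{t+1,c}\}$ collect exactly the character/time pairs at which $y$ does \emph{not} wiggle. First I would consider the polyhedron $P$ in the $y$-variables that is cut out by the niceness constraints \eqref{cons:di}, \eqref{cons:da}, and \eqref{cons:G0}, augmented with the equalities $y_{t,c}=y_{t+1,c}$ for every $(t,c)\in X$. By construction $y\in P$, so $P\neq\emptyset$, and every point of $P$ is a nice coordination whose set of wiggling pairs is contained in that of $y$. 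Hence every point of $P$ has wiggle count at most $k$, and it suffices to exhibit a single \emph{integral} point of $P$.

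Next I would show that $P$ has an integral extreme point. Since each variable is bounded below by \eqref{cons:G0}, $P$ contains no line and is therefore pointed, so it has at least one extreme point $y^\star$. To argue that $y^\star$ is integral, I would re-run the argument of \cref{thm:integerpolytope} directly in $y$-space (there are no $w$-variables here, which only simplifies matters). As there, $\min_{t,c} y^\star_{t,c}=0$ at an extreme point, since otherwise all coordinates could be shifted together by a small $\pm\epsilon$ while remaining feasible. I then define the graph $G$ on the $y$-variables that joins two variables whenever their $y^\star$-values differ by an integer; the equalities \eqref{cons:di} contribute edges of integer difference $\di$ and the equalities from $X$ contribute edges of difference $0$. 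If $G$ were disconnected, some component $C$ could be translated rigidly by a small $\pm\epsilon$ without leaving $P$: intra-component differences are integral and hence preserved; each inter-component inequality \eqref{cons:da} has strict slack, because its two sides differ by a non-integer and thus by strictly more than $\da$; and the bounds \eqref{cons:G0} on $C$ are slack, because a non-integral component contains no coordinate of integer value and hence none equal to $0$. This contradicts extremality, so $G$ is connected; as it contains a coordinate of value $0$ and all edge-differences are integral, every coordinate of $y^\star$ is an integer.

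Taking $y^\star$ then yields an integral nice coordination with at most $k$ wiggles, proving the lemma.

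The step I expect to be the main obstacle is checking that the equalities from $X$ do not break the connectivity/translation argument of \cref{thm:integerpolytope}. Reading them as zero-difference edges of $G$ is exactly what keeps the argument intact, since a non-integral component can then still be translated as a rigid block; the only constraints that could obstruct such a translation are the lower bounds \eqref{cons:G0}, and these have slack precisely because a component meeting an integer value (in particular $0$) is forced to be entirely integral. The remaining ingredients — pointedness of $P$ via \eqref{cons:G0} and the membership $y\in P$ — are routine.
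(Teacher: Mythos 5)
Your proof is correct up to one small repair, and it takes a slightly different route than the paper. The paper works directly on the given coordination: it builds the same integer-difference graph $G$ and repeatedly translates the values of one connected component until two components merge (integrality of \di{} and \da{} guarantees that no niceness constraint, and no non-wiggling pair, is disturbed before a merge occurs); once $G$ is connected, it shifts all coordinates by the common fractional offset to land on integers. You instead freeze the non-wiggle set $X$ as equality constraints, obtain an auxiliary polyhedron $P$, and re-run the extreme-point integrality argument of \cref{thm:integerpolytope} on $P$. Both proofs share the same engine---rigid translation of integer-difference components, with $\di,\da\in\mathbb{N}$ supplying the slack argument---so conceptually they are close. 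What your version buys is that existence of an integral solution falls out of pointedness plus vertex integrality rather than an explicit shifting procedure; moreover, your $P$ is exactly the feasible region of the certifier LP the paper later uses for \NP-membership of \WiggleCProb, so your argument doubles as a correctness proof for that certifier. What the paper's version buys is that it is constructive and never needs a lower bound on the coordinates.

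The repair: the claim ``by construction $y\in P$'' is not literally true, because $P$ includes the bounds \eqref{cons:G0}, whereas a nice coordination may use negative y-coordinates (the definition only requires $y_{t,c}\in\mathbb{R}$). Fix this by first translating the given coordination upward by a suitable constant; validity, niceness, and the wiggle set are invariant under uniform translation, so afterwards $y\ge 0$ and the rest goes through verbatim. Everything else checks out: equality constraints (both \eqref{cons:di} and those from $X$) only join variables whose values differ by an integer, so they never cross components of $G$; a cross-component constraint \eqref{cons:da} has strict slack because a difference that is $\ge\da\in\mathbb{N}$ and non-integral is $>\da$; and a disconnected $G$ necessarily has a fully non-integral component (two integral components would be joined by integer-difference edges), which is exactly what your rigid $\pm\epsilon$ translation needs.
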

\begin{proof}
    The proof is similar to the one of \cref{thm:integerpolytope}. We recommend reading that proof first. 
    Consider a nice coordination with $k$ wiggles.
    We build a graph $G$ with $V(G)=\{y_{t,c}\mid t\in [\ell],c\in \Ac(t)\}$ and $E(G)=\{y_{t,c}y_{t',c'}\mid y_{t,c}-y_{t',c'}\in \mathbb{Z}\}$. If $G$ has multiple connected components, the y-coordinates corresponding to vertices in a single connected component can be increased until the number of connected components of $G$ is decreased. When $G$ is connected, all y-coordinates can be increased until they are all integers. 
\end{proof}

The ILP makes use of the following variables (refer to \cref{sec:twh-lp} for details).
\begin{itemize}
    \item $y_{t,c}$ for $t\in [\ell],c\in \Ac(t)$, y-coordinate of character $c$ at time step $t$.
    \item $z_{t,c}$ for $t\in [\ell-1], c\in \Ac(t)\cap \Ac(t+1)$,
        equals~$1$ if character~$c$ wiggles between~$t$ and~$t+1$,~$0$~otherwise.
\end{itemize}

By \cref{lemma:wigglecountinteger} we can assume the y-coordinates of the 
characters to be integers.  To enforce that $z_{t,c}=0$ if and only if
$y_{t,c}=y_{t+1,c}$ (i.e., there is no wiggle), we use a large constant~$\ymax \in \mathbb{N}$; 
see constraints \eqref{cons:m1} and \eqref{cons:m2} below.
A trivial upper bound for the y-coordinates is
$\max\{\di,\da\} \cdot \sum_{t\in [\ell]} |\Ac(t)|$.  We use this bound for~\ymax.
The sets $N_{\mathcal{M}}(t)$ and $N_A(t)$ are defined in \cref{sec:twh-lp}.
Now we can formulate the ILP:
\begin{align*}
    \text{Minimize } \quad  \sum_{t\in [\ell-1],c\in \Ac(t)} z_{t,c}                                                          \\
    \text{subject to } \hspace{14mm} %
    y_{t,c'}-y_{t,c}  & =\di                   &  & \text{for all } t\in [\ell], (c,c')\in N_{\mathcal{M}}(t)     \tag{Cons-\di}\label{cons:da*} \\
    y_{t,c'}-y_{t,c}  & \ge \da                &  & \text{for all } t\in [\ell], (c,c')\in N_A(t)     \tag{Cons-\da}\label{cons:di*} \\
    y_{t,c}+\ymax\cdot z_{t,c} & \ge y_{t+1,c} &  & \text{for all } t \in [\ell-1],c\in \Ac(t)\cap \Ac(t+1)\tag{M1}\label{cons:m1} \\
    y_{t,c}-\ymax\cdot z_{t,c} & \le y_{t+1,c} &  & \text{for all } t \in [\ell-1],c\in \Ac(t)\cap \Ac(t+1)\tag{M2}\label{cons:m2} \\
    y_{t,c} \in\mathbb{N}_0,~~0 \le y_{t,c} & < \ymax &  & \text{for all }t\in [\ell],c\in\Ac(t)\tag{Int}\label{cons:int}            \\
    z_{t,c}           & \in \{0,1\}            &  & \text{for all }t\in [\ell],c\in\Ac(t)\tag{Bin}\label{cons:bin}
\end{align*}
Constraints \eqref{cons:da*} and \eqref{cons:di*} have been explained in \cref{sec:twh-lp}.
\eqref{cons:int} and \eqref{cons:bin} ensure that $y_{t,c}$ is a positive integer bounded by \ymax{} and
$z_{t,c}$ is binary, respectively.
The objective sums over all $z$-variables and therefore counts the number of wiggles.

As it turned out, optimal solutions created by state-of-the-art solvers produce unusably
large wiggles with this formulation.  We thus introduce a new variable~$h$
for the height of the drawing and, for each $t\in [\ell]$ and $c\in \Ac(t)$, the new constraint $y_{t,c} \le h$.
By adding the term $h/\ymax$ to the objective, we also minimize~$h$.  The adjusted formulation still solves \WiggleCProb.

\section{Routing Character Curves}
\label{sec:routing}

\newcommand*\oct{\ensuremath{\overline{c_t}}}
\newcommand*\rtc{\ensuremath{r_{t,c}}}
\newcommand*\rtcc{\ensuremath{r_{t,\hat{c}}}}
\newcommand*\rmin{\ensuremath{r_{\min}}}
\newcommand*\dxt{\ensuremath{dx_t}}
\newcommand*\dytc{\ensuremath{dy_{t,c}}}
\newcommand*\raddist[3]{\ensuremath{\rule{0pt}{1em}\overrightarrow{\rho_{#1,#2}\rule{0pt}{1.3ex}}(#3)}}
\newcommand*\nupl{\ensuremath{N_{\uparrow,\leftarrow}}}
\newcommand*\nupr{\ensuremath{N_{\uparrow,\rightarrow}}}
\newcommand*\ndnl{\ensuremath{N_{\downarrow,\leftarrow}}}
\newcommand*\ndnr{\ensuremath{N_{\downarrow,\rightarrow}}}
\newcommand*\rthc{\ensuremath{r_{t,\hat{c}}}}

In our storyline drawings, we represent each wiggle by a sequence of two circular arcs.  In order to generate a smooth
transition, we require the tangents at the arcs to be equal where the arcs cross over
(see \cref{fig:geo-1}).
In order to achieve visually pleasing transitions, we require that all arcs have at least
a certain radius and that consecutive lines wiggle “in parallel”
(a precise definition is given later in this section).

\begin{figure}[b!]
    \begin{subfigure}[b]{.32\linewidth}
        \centering
        \includegraphics[page=2]{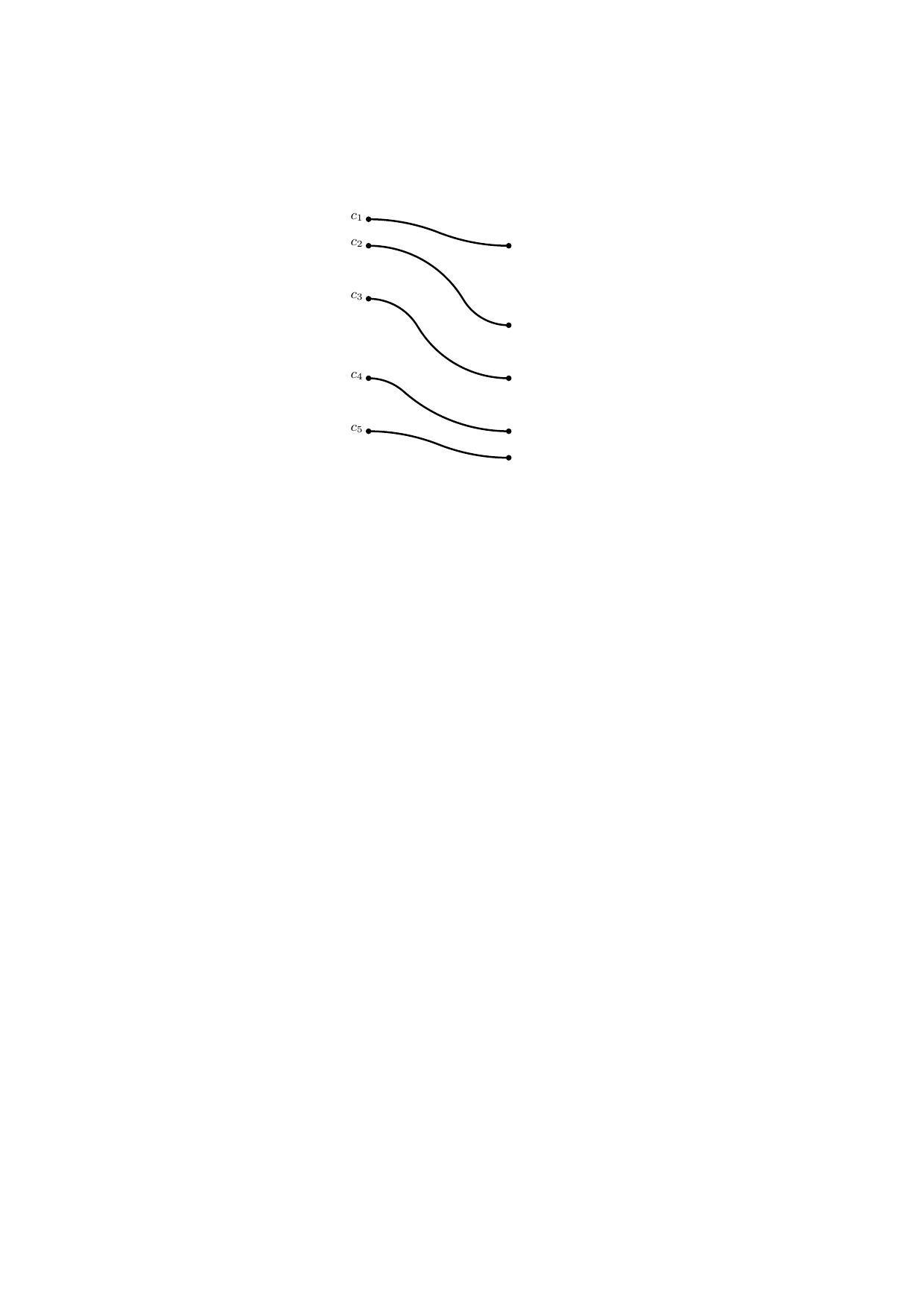}
        \subcaption{Wiggle geometry.}
        \label{fig:geo-1}
    \end{subfigure}
    \hfill
    \begin{subfigure}[b]{.32\linewidth}
        \centering
        \includegraphics[page=4]{graphics/geo.pdf}
        \subcaption{\nolinenumbers Not x-monotone.}
        \label{fig:geo-3}
    \end{subfigure}
    \hfill
    \begin{subfigure}[b]{.32\linewidth}
        \centering
        \includegraphics[page=5]{graphics/geo.pdf}
        \subcaption{\nolinenumbers Radial curve distance.}
        \label{fig:geo-4}
    \end{subfigure}
    \caption{A wiggle is represented by a sequence of two circular arcs.}
    \label{fig:geo-layer-bad}
\end{figure}

When visualizing the transition between a pair of consecutive time steps~$t$ and~$t+1$,
for each character~$c$, the wiggle height $\dytc = |y_{t,c} - y_{t+1,c}|$ is given by the coordination.
Given the radii $r_{t,c}'$ and $r_{t,c}''$ of the two arcs of a character curve~$c$ between time steps~$t$
and~$t+1$, the required horizontal space \dxt{} for this configuration fulfills
\begin{restatable}{equation}{equationdx}
    \dxt^2 = 2 (\rtc' + \rtc'') \dytc - \dytc^2.
\end{restatable}
Refer to \cref{apx:geo} for a detailed derivation.
We require each character curve to be x-monotone.  If we choose \dxt{} badly, we may end up
with a sequence of arcs that is not x-monotone (see \cref{fig:geo-3}).  It is easy to see
that a character curve is x-monotone if and only if $(\rtc' + \rtc'') \geq \dytc$.
With \cref{thmt@@equationdx}, we can state this as $\dxt^2 \ge \dytc^2$.

Between two time steps, two character curves run in parallel, if they do not cross and
start and end with the same vertical distance.
In order to have them wiggle “in parallel”, we use concentric arcs.
Generally, for two x-monotone differentiable curves~$\varphi$ and~$\varphi'$, we define their
\emph{directed radial distance} at position~$x$, denoted by $\raddist{\varphi}{\varphi'}{x}$,
as the length of a normal line segment starting at~$\varphi(x)$ and ending where it
intersects~$\varphi'$ or infinity, if the normal line does not intersect~$\varphi'$.
See \cref{fig:geo-4} for an example.
We define the \emph{radial distance} of~$\varphi$ and~$\varphi'$ at~$x$ as
$\rho_{\varphi,\varphi'}(x) = \min(\raddist{\varphi}{\varphi'}{x}, \raddist{\varphi'}{\varphi}{x})$
(see \cref{fig:geo-4}).

We call two x-monotone curves \emph{related} if their axis-aligned bounding boxes overlap.
We call them \emph{upwards (downwards) co-oriented} if they are also y-monotone
and their y-coordinates both increase (decrease).
Let~$\varphi$ and~$\varphi'$ be two character curves 
between consecutive time steps~$t$ and~$t+1$.
By construction, $\varphi$~and~$\varphi'$ are y-monotone.
If they are also related, co-oriented, and do not cross each other,
we require their radial distance to be monotone.

We define the set $\nupl(t)$ of all pairs of
characters~$c,\hat{c}\in\Ac(t)\cap\Ac(t+1)$ such that
\begin{itemize}
    \item $y_{t,c} < y_{t,\hat{c}}$,
    \item $c$~and~$\hat{c}$ are non-crossing, related, and upwards co-oriented, and
    \item $y_{t,\hat{c}}-y_{t,c} \le y_{t+1,\hat{c}}-y_{t+1,c}$.
\end{itemize}
We define $\ndnl(t)$ for downwards co-oriented characters similarly as well as
$\nupr(t)$ and $\ndnr(t)$ if $y_{t,\hat{c}}-y_{t,c} >y_{t+1,\hat{c}}-y_{t+1,c}$.
Note that these sets contain only characters that wiggle
(i.e. their y-coordinates are different in $t$ and $t+1$).
We use the following LP
to determine~$\dxt^2$ as well as
the radii~$\rtc'$ and~$\rtc''$ for each time step~$t\in[\ell-1]$ and for each
$c \in \Ac(t) \cap \Ac(t+1)$.
\begin{align*}
    \text{Minimize } \quad  \dxt^2                                          \\
    \text{subject to }\quad
    \dxt^2        & \ge \max_{c \in \Ac(t)}\dytc^2    \tag{XM}\label{cons:xm} \\
    \dxt^2        & = 2(\rtc'+\rtc'')\dytc-\dytc^2 && \text{for all } c \in \Ac(t) \cap \Ac(t+1) \tag{R1}\label{cons:r1} \\
    \rtc', \rtc'' & \ge \rmin                      && \text{for all } c \in \Ac(t) \cap \Ac(t+1) \tag{R2}\label{cons:r2} \\
    \rtc'         & \ge \rthc'+(y_{t,c}-y_{t,\hat{c}})  && \text{for all } c, \hat{c} \in \ndnl(t) \tag{D1}\label{cons:d1} \\
    \rtc''        & \le \rthc''-(y_{t+1,c}-y_{t+1,\hat{c}}) && \text{for all } c, \hat{c} \in \ndnr(t) \tag{D2}\label{cons:d2} \\
    \rtc'         & \le \rthc'-(y_{t,c}-y_{t,\hat{c}})  && \text{for all } c, \hat{c} \in \nupl(t) \tag{D3}\label{cons:d3} \\
    \rtc''        & \ge \rthc''+(y_{t+1,c}-y_{t+1,\hat{c}}) && \text{for all } c, \hat{c} \in \nupr(t) \tag{D4}\label{cons:d4} 
\end{align*}

The constraint~\eqref{cons:xm} ensures x-monotonicity between time steps~$t$ and~$t+1$;
\eqref{cons:r1} ensures that the two arcs meet at a distinct point and their tangents match.
The remaining quality criteria are addressed by \eqref{cons:r2} (minimum radius) and
\eqref{cons:d1} to \eqref{cons:d4} (monotone radial distance). 
Note that we can model each variable of kind $\dxt^2$ as a linear variable because we do
not have any terms depend on $\dxt$ (non-squared).

\section{Case Studies}
\label{sec:case-studies}

\definecolor{PKdarkblue}{rgb}{0.121,0.47,0.705}
\definecolor{PKdarkred}{rgb}{0.89 0.102 0.109}
\definecolor{PKdarkgreen}{rgb}{0.2 0.627 0.172}
\definecolor{PKdarkorange}{rgb}{1 0.498 0}
\definecolor{PKdarkpurple}{rgb}{0.415 0.239 0.603}

\definecolor{PKlightblue}{rgb}{0.651 0.807 0.89}
\definecolor{PKlightred}{rgb}{0.984 0.603 0.6}
\definecolor{PKlightgreen}{rgb}{0.698 0.874 0.541}
\definecolor{PKlightorange}{rgb}{0.992 0.749 0.435}

\definecolor{PKlightgray}{rgb}{0.8 0.8 0.8}

\pgfplotscreateplotcyclelist{scatter-marks}{
  {mark options={solid,scale=1,fill=PKlightgray,draw=PKlightgray},mark=pentagon*,PKlightgray},
  {mark options={solid,scale=0.8,fill=PKdarkorange,draw=PKdarkorange},mark=*,PKlightgray},
  {mark options={solid,scale=1.2,fill=PKdarkgreen,draw=PKdarkgreen},mark=triangle*,PKlightgray},
  {mark options={solid,scale=1.2,fill=PKdarkblue,draw=PKdarkblue},mark=diamond*,PKlightgray},
}

\newcommand*\tsub[2]{\textsf{#1\textsubscript{#2}}}

We conducted two case studies that showcase wiggle minimization and the geometric realization we proposed.
Firstly, we compare the different metrics for storyline wiggle (see \cref{sec:preliminaries}) using
both previous datasets from the literature and new instances.  Secondly, we present a new use case for storylines
when visualizing rolling stock schedules.

\begin{figure}[ht]
    \centering
    \includegraphics{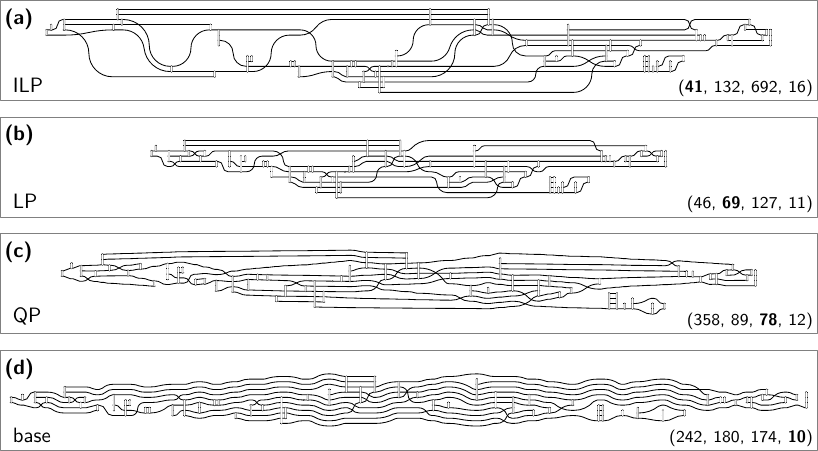}
    \caption{Storyline visualizations of the first chapter of Anna Karenina.  The values in parentheses are
        (wiggle count, linear wiggle height, quadratic wiggle height, height of the bounding box).}
    \label{fig:case-study-anna}
\end{figure}

\subsection{Storyline Benchmark Data}
\label{sub:benchmark}

We compiled a diverse set of benchmark instances for which we could create crossing minimal ordered storylines
using the ILP from \cite{DoblerJJMMN-GD24}.  It consists of three novels (\tsub{n}{1} to \tsub{n}{3}) originally
from The Stanford GraphBase \cite{knuth93} in a modified version by \cite{gronemann2016}, eight blockbuster
movies (\tsub{b}{1} to \tsub{b}{5} from \cite{DoblerJJMMN-GD24} and \tsub{b}{6} to \tsub{b}{8} from
\cite{tanahashi_efficient_2015}), one instance of publication data (\tsub{p}{1}) from \cite{dnsvw-cmtis-EuroCG23},
and five rolling stock schedules (\tsub{t}{1} to \tsub{t}{5}; see \cref{sub:rolling}).

We implemented the LP solving \WiggleHProb (see \cref{section:linearprogram}),
the QP solving \QdrWiggleHProb (see \cref{sub:quadratic}),
and the ILP for \WiggleCProb (see \cref{sub:wc-ilp})
including the additional secondary objective that minimizes drawing height.
We refer to them as \textsf{LP}, \textsf{QP}, and \textsf{ILP}, respectively.
We implemented \textsf{base} -- a heuristic that simply centers the character curves vertically in each time step -- as a baseline.
Note that \textsf{base} produces drawings of minimum height.

Refer to \cref{tab:dataset} for basic properties and detailed results for the instances in our benchmark set
where $|\mathcal{C}|$ is the number of characters, $|T|$ the number of time steps, and $|\mathcal{M}|$
the number of meetings (see \cref{sec:preliminaries}).  The center columns show the respective optimum
value for each metric, wiggle count (WC) as optimized by \textsf{ILP}, linear wiggle height (LWH) as
optimized by \textsf{LP}, quadratic wiggle height (QWH) as optimized by \textsf{QP}, and total bounding
box height (TH) as optimized by \textsf{base}.  The running times refer solely to solving the respective
mathematical program using Gurobi 12.0.2 under Arch Linux (Kernel 6.14)
running on a machine with an AMD Ryzen 7 7800X3D and 48\,GB of memory. 

\begin{table}[p]
\caption{Basic properties, optimal metrics, and running times for our benchmark set.}
\begin{tabular}{l @{~} l @{~}|@{~} r @{~~~} r @{~~} r | r @{~} r @{~} r @{~} r | r @{~} r @{~~} r}
\toprule
\multicolumn{5}{l|}{instance} & \multicolumn{4}{l|}{respective optimum} & \multicolumn{3}{l}{running time [s]} \\
            & & $|\mathcal{C}|$ & $|T|$ & $|\mathcal{M}|$ & WC & LWH & \multicolumn{1}{r}{QWH} & TH & \textsf{LP} & \textsf{QP} & \textsf{ILP} \\
\midrule
\tsub{n}{1} & Anna Karenina Ch.\,1     &  41  &  58  &     58 &  41 &  69 &    78.4 & 10   & <\,0.1 & <\,0.1 &      0.7 \\
\tsub{n}{2} & Huckleberry Finn         &  74  & 107  &    107 &  77 & 204 &   279.6 & 18   & <\,0.1 & <\,0.1 &     11.7 \\
\tsub{n}{3} & Les Misérables Ch.\,2--4 &  47  & 234  &    234 & 144 & 377 &   715.6 & 28   &    0.2 &    0.2 & 32,434.6 \\
\midrule                                                                                                       
\tsub{b}{1} & Back to the Future       &  34  &  67  &     67 &  58 & 160 &   202.8 & 19   & <\,0.1 & <\,0.1 &      6.7 \\
\tsub{b}{2} & Avatar (2009)            &  35  &  54  &     54 & 213 & 546 & 1,588.6 & 21   & <\,0.1 &    0.1 &  3,429.0 \\
\tsub{b}{3} & Ocean's Eleven           &  60  &  96  &     96 & 223 & 749 & 2,177.6 & 22   & <\,0.1 &    0.1 &     99.7 \\
\tsub{b}{4} & Harry Potter 1           &  52  &  54  &     54 & 179 & 661 & 2,111.1 & 33   & <\,0.1 &    0.1 &    725.4 \\
\tsub{b}{5} & Forrest Gump             & 113  & 103  &    103 &  68 & 214 &   248.3 & 20   & <\,0.1 & <\,0.1 &    238.4 \\
\tsub{b}{6} & The Matrix               &  14  & 101  &    278 &  37 &  48 &    39.1 & 12   & <\,0.1 & <\,0.1 &      0.1 \\
\tsub{b}{7} & Star Wars (1977)         &  14  & 200  &    824 &  51 &  76 &   146.2 & 10   & <\,0.1 &    0.1 &      0.8 \\
\tsub{b}{8} & Inception                &  10  & 490  &  1,466 &  51 &  78 &   116.4 & 7    & <\,0.1 &    0.1 &      0.7 \\
\midrule                                                                                                       
\tsub{p}{1} & gdea20                   &  19  & 100  &    100 &  46 &  96 &   145.4 & 16   & <\,0.1 &    0.1 &      9.2 \\
\midrule                                                                                                       
\tsub{t}{1} & DDZ                      &  38  & 113  &  1,976 &  42 &  84 &    96.6 & 36   & <\,0.1 &    0.2 &     49.1 \\
\tsub{t}{2} & FLIRT                    &  47  & 106  &  2,339 & 149 & 294 &   296.3 & 46   &    0.1 &    0.2 &    401.8 \\
\tsub{t}{3} & SGM                      &  69  & 173  &  5,262 & 263 & 529 &   540.0 & 68   &    0.2 &    0.3 &    184.0 \\
\tsub{t}{4} & SLT                      & 113  & 375  & 21,745 & 254 & 515 &   540.2 & 112  &    0.9 &    1.4 &  3,280.6 \\
\tsub{t}{5} & VIRM  & 146  & 241  & 17,410 & 170\textsuperscript{a}\hspace{-1ex} & 347 &   431.6 & 145  &    0.9 &    1.1 & \textit{dnf}\\
\bottomrule
\multicolumn{12}{l}{\footnotesize\textsuperscript{a}Wiggle count minimization did not finish within 24\,hours (best bound 165, gap 2.94\,\%).}
\end{tabular}
\label{tab:dataset}
\end{table}

Drawings of \tsub{n}{1} created with each of the four methods can be found in \cref{fig:case-study-anna}.
We see that minimizing the number of wiggles comes at a cost of some large wiggles and
that producing many -- thus small -- wiggles results in an unnecessarily wide drawing when using our style.
There could be a case for \textsf{QP} with an adjusted drawing style but for our current rendering, \textsf{LP} produces
the best results.

\begin{figure}[p]
    \centering
    \pgfplotsset{width=6cm,height=3cm,xmin=1,xmax=17,enlarge x limits=0.05,
        cycle list name=scatter-marks,unbounded coords=jump,
        xtick distance=1,x tick label style={anchor=base,yshift=-2ex},
        xticklabels={,,\tsub{n}{1},\tsub{n}{2},\tsub{n}{3},\tsub{b}{1},\tsub{b}{2},\tsub{b}{3},\tsub{b}{4},\tsub{b}{5},\tsub{b}{6},\tsub{b}{7},\tsub{b}{8},\tsub{p}{1},\tsub{t}{1},\tsub{t}{2},\tsub{t}{3},\tsub{t}{4},\tsub{t}{5}}}
    \hspace{3mm} %
    \begin{tabular}{r@{\hspace{10mm}}l}
    \begin{tikzpicture}[trim axis left]
    \begin{semilogyaxis}[ytick={1,2,4,8,16,32},yticklabels={1,2,4,8,16,32},
        legend columns=-1,legend entries={\textsf{base},\textsf{ILP},\textsf{LP},\textsf{QP}},
        legend to name=mylegend]
    \addplot+ table[y=base] {./graphics/m-wc.dat};
    \addplot+ table[y=ilp]  {./graphics/m-wc.dat};
    \addplot+ table[y=lp]   {./graphics/m-wc.dat};
    \addplot+ table[y=qp]   {./graphics/m-wc.dat};
    \draw (rel axis cs:0.03,0.95) node[anchor=north west] {wiggle count};
    \end{semilogyaxis}
    \end{tikzpicture}
        &
    \begin{tikzpicture}[trim axis left]
    \begin{semilogyaxis}[ytick={1,2,4,8,16},yticklabels={1,2,4,8,16}]
    \addplot+ table[y=base] {./graphics/m-lwh.dat};
    \addplot+ table[y=ilp]  {./graphics/m-lwh.dat};
    \addplot+ table[y=lp]   {./graphics/m-lwh.dat};
    \addplot+ table[y=qp]   {./graphics/m-lwh.dat};
    \draw (rel axis cs:0.03,0.95) node[anchor=north west] {linear wiggle height};
    \end{semilogyaxis}
    \end{tikzpicture}
        \\
    \begin{tikzpicture}[trim axis left]
    \begin{semilogyaxis}[ymax=64,ytick={1,2,4,8,16,32},yticklabels={1,2,4,8,16,32}]
    \addplot+ table[y=base] {./graphics/m-qwh.dat};
    \addplot+ table[y=ilp]  {./graphics/m-qwh.dat};
    \addplot+ table[y=lp]   {./graphics/m-qwh.dat};
    \addplot+ table[y=qp]   {./graphics/m-qwh.dat};
    \draw (rel axis cs:0.03,0.95) node[anchor=north west] {quadratic wiggle height};
    \end{semilogyaxis}
    \end{tikzpicture}
        &
    \begin{tikzpicture}[trim axis left]
    \begin{semilogyaxis}[ymax=3,ytick={1,1.5,2.25},yticklabels={1,1.5,2.25}]
    \addplot+ table[y=base] {./graphics/m-th.dat};  \label{pgflabel:m:base}
    \addplot+ table[y=ilp]  {./graphics/m-th.dat};  \label{pgflabel:m:ilp}
    \addplot+ table[y=lp]   {./graphics/m-th.dat};  \label{pgflabel:m:lp}
    \addplot+ table[y=qp]   {./graphics/m-th.dat};  \label{pgflabel:m:qp}
    \draw (rel axis cs:0.03,0.95) node[anchor=north west] {total height};
    \end{semilogyaxis}
    \end{tikzpicture}
    \end{tabular}
    \ref{mylegend}
    \caption{Metrics (relative to the respective optimum) of the
      wiggle minimization methods
      for our storyline benchmark set (\textbf{n}ovels,
      \textbf{b}lockbusters, \textbf{p}ublications, \textbf{t}rains).
      The y-axes are logarithmic.}
  \label{fig:metrics}
\end{figure}
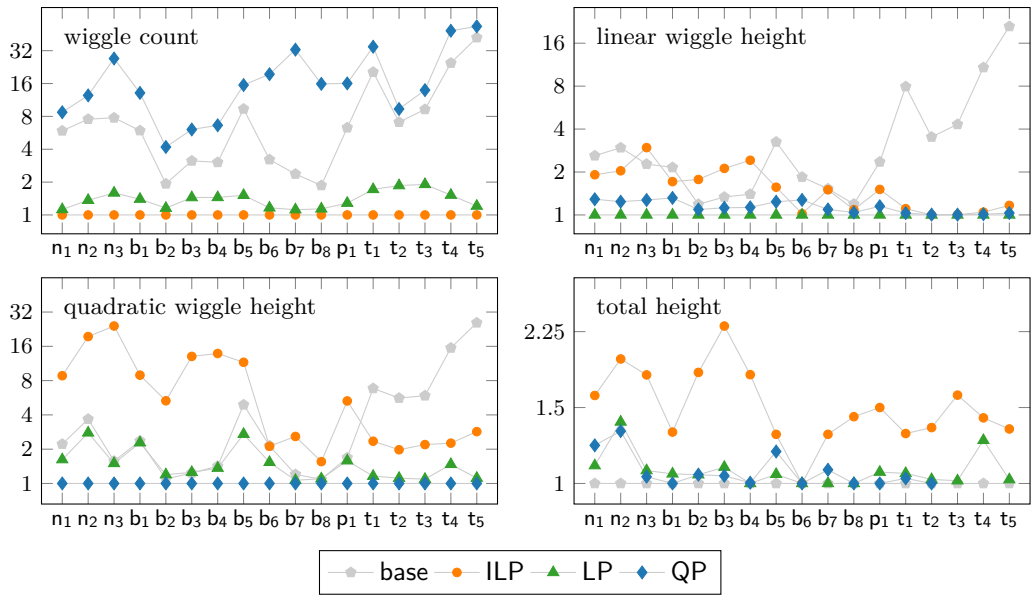

Detailed results of the performance metrics for the dataset can be found in \cref{fig:metrics}.  We see that optimizing for one objective
can lead to poor performance in other metrics.
Overall, however, \textsf{LP} produces good results
more consistently than the other methods.  This is why we used it for the drawings presented in our second case study.

\subsection{Rolling Stock Scheduling}
\label{sub:rolling}

To our own surprise, storylines show up in a seemingly unrelated area, namely the visualization of rolling stock schedules.  
Rolling stock scheduling, which is an important subtask in railway optimization, is defined as follows.
Given a set of train units,
a set of possible compositions of train units, a set of trips given by the timetable, and passenger demands per trip, the task is to find a feasible assignment of train units (possibly in composition) to trips that minimizes a weighted sum of operating costs.  
These costs include, e.g., the total distance driven,
the load factor, and the number of coupling operations.
There are different approaches for visualizing rolling stock schedules
(see~\cite{borndorfer_handouts_2019} and \cite[Fig.~1]{grimm_rolling-stock_2025}
for examples).

We visualize such a schedule as a storyline where each character curve represents the movement
of a specific train unit over the planning period (usually one day of operation)
and the composition of (typically two or three) train units is modeled as a meeting.
In contrast to our basic model, such meetings now extend over several layers.
Note that we use the term layer to avoid confusion with time steps in the timetable.
In order to satisfy the additional constraints of ``rolling stock'' storylines, we needed to extend our basic model.
We split each trip into at least two meetings, one at the departure time and one at the arrival time.
We then group as many consecutive meetings (of pairwise different train units)
into one layer as possible so the storyline stays compact.  If departure and arrival meeting of one trip
do not fall into consecutive layers, we insert an additional meeting for each intermediate layer.

Furthermore, we had to slightly adjust the ILP for crossing minimizing.
We fixed the ordering of the characters
within a meeting as they resemble the order of train units in a composition, which is relevant for planning.
Additionally, we could have required that meetings belonging to the same trip must be drawn straight
(as rectangles instead of as sequences of vertical bars) and that meetings cannot take part in a crossing,
but for simplicity, we have not enforced this.  However, we rewarded such solutions in the LP
for linear wiggle height minimization using soft constraints.  In the same way, we rewarded the vertical alignment of
meetings where a train composed of the same units operates another trip.

\begin{figure}[tb]
    \centering
    \includegraphics[width=\textwidth,trim={25mm 450mm 40mm 25mm}]{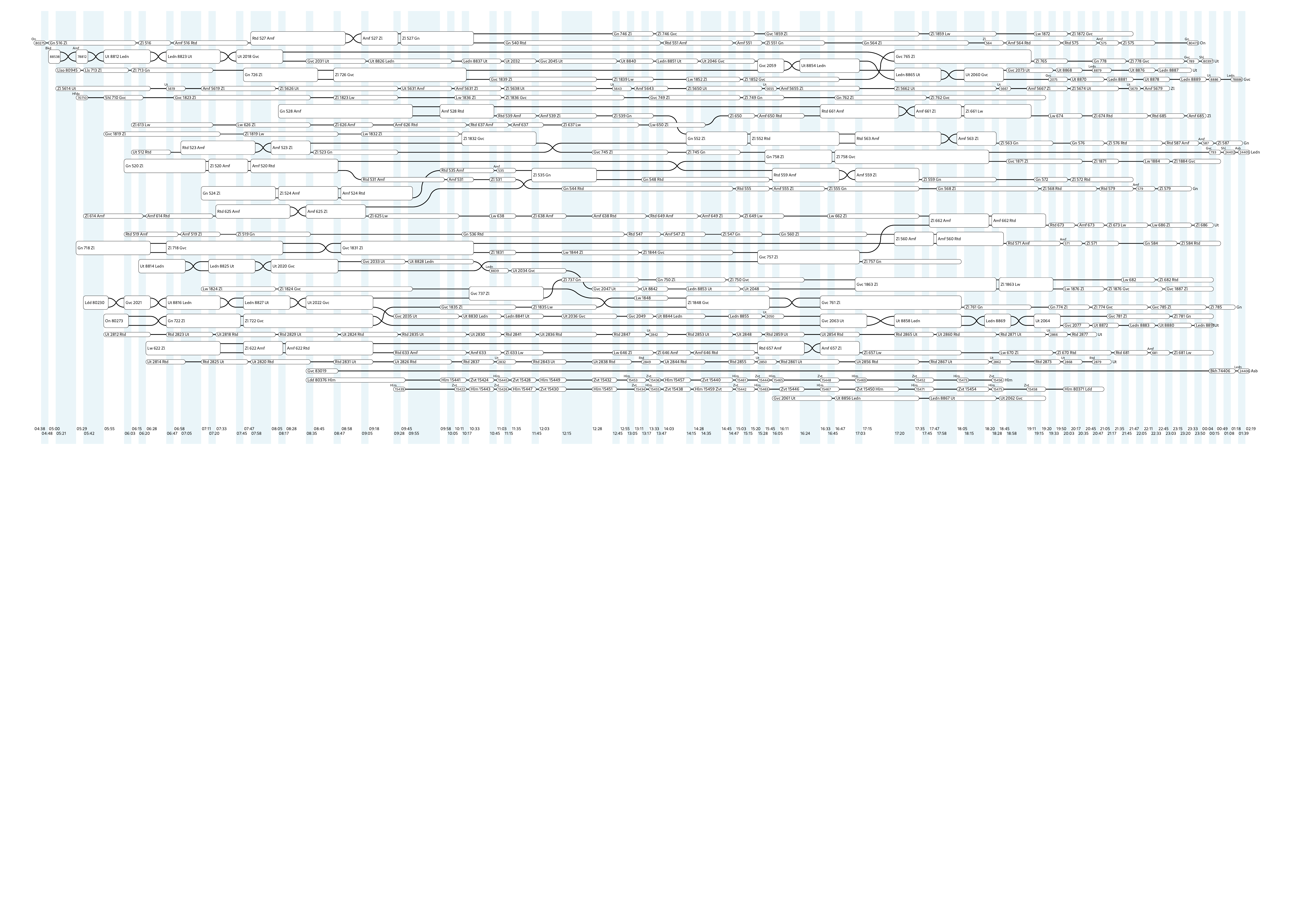}
    \caption{Storyline visualization of a rolling stock schedule using our drawing style.}
    \label{fig:case-study-ddz}
\end{figure}

An example of our visualization is shown in \cref{fig:case-study-ddz} (larger versions are in \cref{apx:fullscreen}).  
It depicts the same data set as \cite[Fig.~1]{grimm_rolling-stock_2025} (but unfortunately a different solution).
Because of how we assign meetings to layers,
the x-coordinates represent the departure and arrival times only vaguely.  
Also the width of a box does not represent the duration of a trip  exactly.  
This was a deliberate (but optional) choice to achieve a more compact visualization.  

\section{Conclusion}

We have given exact solutions for three variants of wiggle
minimization.  In particular, our efficient LP solution of \WiggleHProb
noticeably improves the aesthetic quality, while it does not cost much in
terms of computation time (compared to crossing minimization).

We leave several questions open.  
\begin{itemize}
    \item We have shown that \WiggleCProb is
\NP-complete, but we can solve the restriction to two time steps
efficiently.  
Is \WiggleCProb in \FPT\ with respect to the number of time
steps?
\item Our geometric routing produces very readable drawings, but
sometimes they are quite wide.  Is there a more compact geometric
solution that maintains the advantages of our method?
\item Despite the rich literature on optimizing quality metrics in storyline visualization, there exists no formal user evaluation on the effects on readability of their optimization. In particular, do visualizations of rolling stock schedules as crossing- and wiggle-optimized storyline visualizations lead to improvements in train operation planning?
\end{itemize}

\bibliography{abbrv,literature}

\clearpage
\appendix

\section{Detailed Derivation of \texorpdfstring{\cref{thmt@@equationdx}}{}}\label{apx:geo}

We want to derive that given the radii $\rtc'$ and $\rtc''$ of the two arcs forming the character curve
for character~$c$ between time steps~$t$ and~$t+1$, the required horizontal space~\dxt{} fulfills
\equationdx*
Let $r_{t,c}$ be the sum of $\rtc'$ and $\rtc''$.  Let \oct{} be the distance
between $(x_{t,c}, y_{t,c})$ and $(x_{t+1,c}, y_{t+1,c})$.  See \cref{fig:geo-2} for illustration.
We see that there is a right-angled triangle with sides of length \dytc, \dxt{} and
\oct, respectively.  Furthermore, there is an isosceles triangle with base length
\oct{} and legs of length \rtc{} that has an altitude of length \dxt.
Therefore, $\oct^2 = \dxt^2 + \dytc^2$ and
$\dxt = \frac{\oct}{2 \rtc} \sqrt{4 \rtc^2 - \oct^2}$.

By substitution, we get
\begin{align*}
    \dxt^2                               & = \frac{\oct^2}{4\rtc^2}\left(4\rtc^2-\oct^2\right)          \\
    \frac{4\rtc^2\dxt^2}{\oct^2}         & = 4\rtc^2-\oct^2                                             \\
    \frac{4\rtc^2\dxt^2}{\dxt^2+\dytc^2} & = 4\rtc^2-\dxt^2-\dytc^2                                     \\
    4\rtc^2\dxt^2                        & = 4\rtc^2\dxt^2-\dxt^4-2\dxt^2\dytc^2-\dytc^4+4\rtc^2\dytc^2 \\
    \left(\dxt^2+\dytc^2\right)^2        & = 4\rtc^2\dytc^2                                             \\
    \dxt^2                               & = 2\rtc\dytc - \dytc^2.
\end{align*}

\begin{figure}[htb]
\centering
\includegraphics[page=6]{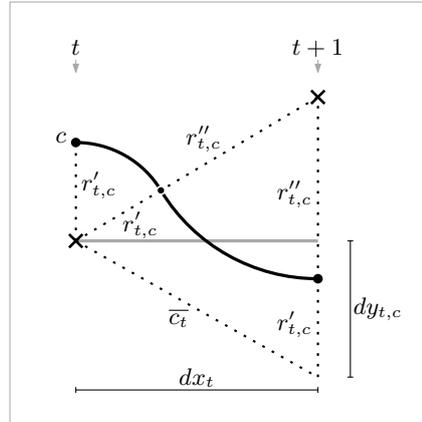}
\caption{The distance $dx$ of two consecutive time steps depends on $dy_i$ and $r_i$}
\label{fig:geo-2}
\end{figure}

\clearpage

\section{Full Page Versions of \texorpdfstring{\cref{fig:case-study-ddz}}{} and \texorpdfstring{\cite[Fig.~1]{grimm_rolling-stock_2025}}{}}
\label{apx:fullscreen}

\begin{sidewaysfigure}
    \includegraphics[width=\textwidth,trim={25mm 450mm 40mm 25mm}]{ddz.pdf}
    \caption{Storyline visualization of a rolling stock schedule using our drawing style (reprise of \cref{fig:case-study-ddz}).}
    \label{fig:case-study-ddz-large}
\end{sidewaysfigure}
\begin{sidewaysfigure}
    \includegraphics[width=\textwidth]{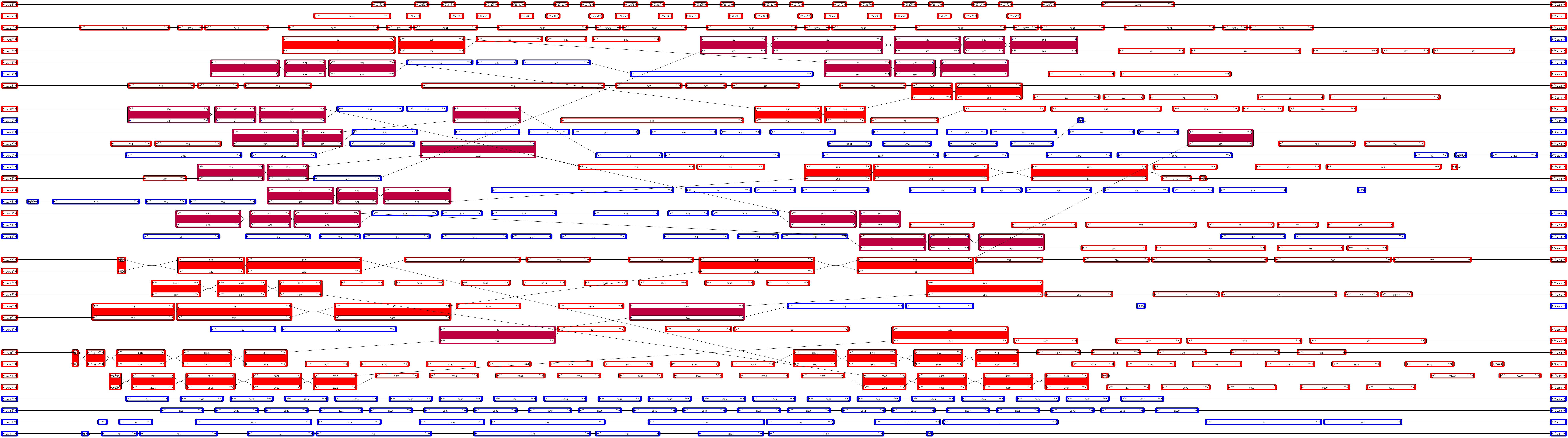}
    \caption{Visualization of a rolling stock schedule from \cite[Fig.~1]{grimm_rolling-stock_2025}.}
    \label{fig:grimm-ddz-large}
\end{sidewaysfigure}

\end{document}